\documentclass[american]{article}
\usepackage[top=1in, bottom=1in, left=1.5in, right=1.5in]{geometry}

\usepackage[T1]{fontenc}
\usepackage[american]{babel}
\usepackage{amsmath,amssymb,amsthm}
\usepackage{slashed}
\usepackage{nicematrix}
\usepackage{setspace}
\usepackage[authoryear]{natbib}
\usepackage{microtype}
\usepackage[colorlinks=true,linkcolor=blue,citecolor=blue,urlcolor=blue]{hyperref}

\theoremstyle{plain}
\newtheorem*{thm*}{Theorem}
\newtheorem{lem}{Lemma}
\newtheorem{cor}{Corollary}
\newtheorem{thm}{Theorem}
\newtheorem{prop}{Proposition}

\theoremstyle{definition}
\newtheorem{defn}{Definition}
\newtheorem{example}{Example}

\theoremstyle{remark}
\newtheorem{rem}{Remark}

\newcommand{\supp}{\operatorname{supp}}

\begin{document}
\title{Continuity of stable marriage correspondences\thanks{Haluk Ergin was a named author on an earlier version of this paper. We thank Haluk Ergin and Chris Shannon for guidance. Mistakes are solely the named authors'.\\
\textbf{Sandholtz:} willsandholtz@berkeley.edu. This work was begun when the author was a PhD student at UC Berkeley.\\
\textbf{Tai:} andrew.a.tai@gmail.com. US Department of War. Disclaimer: The views expressed are those of the author and do not reflect the official policy or position of the Department of War or the U.S. Government.}}
\author{Will Sandholtz \and Andrew Tai}
\date{September 2026}
\maketitle
\begin{abstract}
\noindent We study a continuum version of the stable marriage problem, in which each agent has a variable mass to be paired to the opposite side. An agent's mass is interpretable as the divisible time of a single atomistic agent, or as a mass of infinitesimal agents of a particular type. Our object of interest is the mapping between the vector of masses to the set of stable matches. As in the canonical stable marriage model, the set of stable matches can be multi-valued, so this mapping is a correspondence. This correspondence is continuous if and only if the stable match is unique, implying that small changes in the masses may cause discontinuous jumps in the set of stable matches.
\end{abstract}

\section{Introduction}

The stable marriage problem of \citet{GS62} is the canonical two-sided matching market. In their model, two sides of discrete agents seek to match with each other in a way that is \emph{stable}, in the sense that no man and woman who are not paired would be better off by marrying each other. \citet{BB02} extend this to a \emph{stable allocation} problem between ``workers'' and ``employers''. Each agent remains discrete, but there is a divisible mass of each agent. The goal is to allocate the time of workers to employers in a stable way, similarly defined---no mass of a pair of agents would break off and re-match with each other. The mass of an agent can be alternatively interpreted as infinitesimal individuals of a given type.

It is a well-known informal fact of the Gale and Shapley model that small changes to the market composition (e.g. addition or removal of agents) can result in dramatic changes to the set of stable matches.\footnote{For a formal result of this flavor, see \citet{AKL16}.} We extend and formalize this result to the stable allocation setting, taking advantage of the setting's additional topological structure. Like in the discrete case, the set of stable matches may be multi-valued. Thus the mapping from the vector of type masses (``populations'') to the set of stable matches is a set-valued function (\textbf{correspondence}). Our main result is that this correspondence is continuous if and only if it is single-valued; that is, when the stable matching is unique. An informal statement is below.
\begin{thm*}
The stable matching correspondence $f(\cdot)$ is continuous (upper and lower hemicontinuous) at a population vector $p$ if and only if $f(p)$ is single-valued.
\end{thm*}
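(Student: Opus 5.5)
The plan is to fix the model in the form of \citet{BB02}. There are finite sets of men $M$ and women $W$ (possibly with an outside option), strict preferences, and a population vector $p\in\mathbb{R}^{M\cup W}_{+}$. A match is a nonnegative matrix $x$ whose row and column sums are at most $p$. Stability says that for every pair $(m,w)$, either $m$ has no spare capacity and all his mass is placed with partners he weakly prefers to $w$, or $w$ satisfies the symmetric condition. The set of feasible matches is a polytope that depends continuously on $p$. Stability is a finite collection of conditions of the form ``some sum of linear quantities equals a linear function of $p$,'' combined by conjunction and disjunction. I would treat $f$ as the correspondence taking $p$ to the set of stable matches and prove the two directions separately.

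\textbf{Upper hemicontinuity everywhere, hence sufficiency.} First, $f$ has a closed graph. Take $p_n\to p$ and $x_n\in f(p_n)$ with $x_n\to x$. Every stability condition for a pair $(m,w)$ is a finite disjunction of closed (weak inequality or equality) conditions on $(x,p)$. Passing to a subsequence on which the same disjunct holds for every pair, that disjunct survives in the limit. Second, the range is locally bounded, since $x_n\le \max p_n$. Together these give upper hemicontinuity at every $p$, and existence (the extension of Gale--Shapley in \citet{BB02}) makes $f$ nonempty-valued. If $f(p)=\{x^*\}$, then upper hemicontinuity with a singleton value already implies lower hemicontinuity: any $x_n\in f(p_n)$ has all its limit points in $f(p)=\{x^*\}$, so $x_n\to x^*$. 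Hence $f$ is continuous at $p$.

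\textbf{Necessity: multiplicity breaks lower hemicontinuity.} Suppose $f(p)$ contains two distinct stable matches. By the lattice structure of \citet{BB02} there are then a man-optimal match $x^M$ and a woman-optimal match $x^W$ with $x^M\ne x^W$. I would perturb populations to kill one of them. Since $x^M\neq x^W$, some man $m$ is strictly better off in $x^M$, so $x^W$ assigns mass of $m$ to some partner below a woman $w$ whom $x^M$ gives to $m$. That woman $w$ is then fully used in $x^W$ by partners she prefers to $m$. Now take $p_n$ equal to $p$ except that $w$'s mass is increased by $1/n$; an alternative is to add a small mass of a new-type agent, but changing an existing coordinate is cleaner. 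In $p_n$, woman $w$ has spare capacity. If $m$ keeps mass with partners worse than $w$, the pair $(m,w)$ blocks, so in every $y\in f(p_n)$ the man $m$ fills his capacity with women he ranks at least as high as $w$. The goal is to show that every $y_n\in f(p_n)$ stays a uniform distance from $x^W$, which contradicts lower hemicontinuity at $x^W$.

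\textbf{Main obstacle.} The hardest step is this uniform separation. The blocking argument only forces $m$'s mass onto women at or above $w$, and the freed mass of $w$ is just $1/n$. The new match might therefore differ from $x^W$ by only $O(1/n)$ along $m$'s row, while other agents' rows stay close. My first approach is to use rejection chains, i.e.\ the deferred acceptance dynamics of \citet{BB02}. Running the woman-proposing algorithm from $x^W$ after adding $\epsilon$ to $w$ should start a chain that pushes mass around the cycle by which $x^M$ and $x^W$ differ. Because that cycle carries positive mass at $p$ (a ``rotation'' with positive weight), an $\epsilon$ perturbation triggers displacement of the whole rotation, not just $\epsilon$ of it. That would show $f(p_n)$ lies near $x^M$, away from $x^W$. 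If this is hard to control in general, I would reduce to the cycle supporting $x^M-x^W$. Agents outside the cycle are matched identically in every stable match near $p$, so it suffices to analyze the cycle, which I would handle with an explicit computation like the $2\times2$ example in which both diagonal matchings are stable and a tiny surplus on one side selects one of them.
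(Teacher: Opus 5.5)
You have a genuine gap in the necessity direction. Your upper hemicontinuity half (closed graph plus local boundedness, then a singleton value forces convergence) is correct and is essentially the paper's proof of Theorem~\ref{thm:uhc}.

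\textbf{Where the necessity argument fails.} The step ``in $p_n$, woman $w$ has spare capacity, so in every $y\in f(p_n)$ the man $m$ fills his capacity with women $\succeq_m w$'' does not follow. What is true is only that $x^W$ itself is not stable at $p_n$. In an arbitrary $y\in f(p_n)$ the extra $1/n$ of $w$ can be matched, and stability at $p_n$ gives only the usual disjunction for the pair $(m,w)$. More importantly, the uniform separation you correctly flag as the crux is left to heuristics. Neither ``an $\varepsilon$-perturbation triggers displacement of the whole rotation'' nor ``agents outside the cycle are matched identically in every stable match near $p$'' follows from anything available. Rural hospitals pins down only agents with unmatched mass, and only at a fixed population.

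Your choice of target also works against you. $x^W$ can have acyclic support: in Example~\ref{exa:2}, $Q^W=Q^0$ is a perfect matching. So support-based rigidity says nothing about it, and separating $f(p_n)$ from $x^W$ would have to be argued from scratch.

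\textbf{The missing idea.} The paper replaces dynamics with a rigidity argument on supports.
\begin{enumerate}
\item In any stable $Q$, every positive edge $(m,w)$ is the worst realized partner of $m$ or of $w$; otherwise $(m,w)$ blocks. Hence the component of $G(Q)$ containing $\slashed{o}$ has at most $|V|-1$ edges and is a tree (Lemma~\ref{lem:tree}).
\item Consequently any cycle in the support lies in a component with no unmatched mass. In that component the total masses of men and women are exactly equal.
\item When $|f(p)|>1$, the improvement cycle of \citet{BB02} gives a stable $Q^{\alpha}$ whose support contains a cycle $\Gamma$ (Lemma~\ref{lem:cycle}). Target $Q^{\alpha}$, not $x^W$.
\item Perturb an agent on $\Gamma$, say $p^n=p+\frac1n e_{m_1}$. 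Any $Q^n\in f(p^n)$ within $\min_{\Gamma}Q^{\alpha}$ of $Q^{\alpha}$ has $\Gamma$ in its support. So the component $(M^n,W^n)$ containing $\Gamma$ satisfies $\frac1n=\sum_{w\in W^n}p_w-\sum_{m\in M^n}p_m$.
\item The right side takes only finitely many values, which is impossible for large $n$.
\end{enumerate}
This sidesteps your $O(1/n)$ worry entirely. You never track how the extra mass moves; you show that any match near $Q^{\alpha}$ would have to satisfy an exact balance identity that the perturbation breaks.
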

Before we present the model and technical results, we note some applications of the setting and our results.

\paragraph*{Gig platforms.}
On gig platforms, workers and consumers are plentiful and can be modeled symmetrically. The platform seeks to match the time of workers to the needs of consumers.\footnote{This contrasts with the model in \citet{AL16}, where one side is plentiful (students), and the other is finite (schools).}

\paragraph*{Stable allocation of tasks.}
One side is workers, and the other is employers (or projects). Rather than having a mass of agents, each discrete agent has a mass of available or required time.

\paragraph*{Marriage markets.}
\citet{HWZ23} build a similar continuum marriage market with four discrete types per side.\footnote{Their model neither contains nor is contained by ours; we refer readers to their paper for details.} They note that small demographic shifts can drastically alter results. Our main theorem isolates a source of change; since the stable matching correspondence is not generally continuous, even infinitesimal changes to the population vector can cause large shifts in the set of stable matches.

\paragraph*{Negative implications.}
In the model we study, $f(p)$ may be multi-valued. This introduces two sources of difficulties for comparative statics: selection of a stable match inside $f(p)$ matters; and changes to $p$ can cause discontinuous changes in $f(p)$.

\paragraph*{Positive implications.}
There are special cases where $f(p)$ is single-valued. \citet{BB02} note a sufficient condition for a unique stable matching, called ``strong nondegeneracy''. Informally, a market is strongly nondegenerate if neither it nor any proper subset has equal mass of the two sides. If the populations are drawn from real numbers (rather than integers), uniqueness and thus continuity is the typical case. We formalize this in Section~\ref{subsec:Uniqueness}.

Another immediate case is where at least one side shares identical preferences; e.g. all the men types have the same preferences over the women. In these cases, $f(p)$ is continuous.

\vspace{\baselineskip}
A number of other papers deal with two-sided matching in continuum settings, including \citet{CL24}, \citet{CKK19}, \citet{GK21}, and \citet{Menzel15}.

Section 2 formally describes the model. Section 3 describes useful structure of the set of stable matches. Section 4 contains the main theorem. Section 5 describes an implication of our main result to the approximation of large discrete markets with continuum markets.

\section{Model}

We study a two-sided marriage market with discrete agent types, which determine preferences, and masses of each type. The model was first noted in \citet{BB02}.\footnote{In particular, their unconstrained model. The main difference is notational---the unmatched option $\slashed{o}$ is not explicitly encoded; agents are instead allowed to be undermatched relative to their mass.}

There is a finite set $M$ of man types, each type with a strict preference order $\succ_{m}$ over $W\cup\{\slashed{o}\}$. Likewise, a finite set $W$ of women types, each with a strict preference order $\succ_{w}$ over $M\cup\{\slashed{o}\}$. We use $\slashed{o}$ to represent being unmatched, and suppress the set brackets in the future. There is a vector of positive masses of each type (\textbf{population}) $p\in\mathcal{P}=\mathbb{R}_{++}^{M\cup W}$.

An \textbf{aggregate matching} is a matrix $Q\in\mathcal{Q}=\mathbb{R}_{+}^{(M\cup\slashed{o})\times(W\cup\slashed{o})}$, where $Q_{m,w}$ represents the mass of type $m$ matched to type $w$. $Q_{\slashed{o},\slashed{o}}=0$ by convention. An aggregate matching is \textbf{feasible} if $\sum_{w\in W\cup\slashed{o}}Q_{m,w}=p_{m}$ for all $m\in M$ and $\sum_{m\in M\cup\slashed{o}}Q_{m,w}=p_{w}$ for all $w\in W$. Given $Q$, we denote the support of $Q$ as $\supp(Q):=\{(m,w):Q_{m,w}>0\}$.
\begin{defn}
A pair $(m,w)$ \textbf{blocks} $Q$ if there exist $(m',w')\in(M\cup\slashed{o})\times(W\cup\slashed{o})$ such that
\[
Q_{m,w'}>0,\quad Q_{m',w}>0
\]
and
\[
w\succ_{m}w',\quad m\succ_{w}m'.
\]
$Q$ is \textbf{pairwise stable (PS)} if there does not exist any blocking pair. $Q$ is \textbf{individually rational (IR)} if, whenever $Q_{m,w}>0$ for $m\in M,w\in W$, then $w\succeq_{m}\slashed{o}$ and $m\succeq_{w}\slashed{o}$. $Q$ is \textbf{stable} if it is PS and IR.
\end{defn}
\begin{example}
\label{exa:Ex}
Let a continuum market be $N=\{m_{1},m_{2},w_{1},w_{2}\}$ with preferences

\begin{table}[htbp]
\centering
\begin{tabular}{cccc}
$m_{1}$ & $m_{2}$ & $w_{1}$ & $w_{2}$\\
\hline
$w_{1}$ & $w_{2}$ & $m_{2}$ & $m_{1}$\\
$w_{2}$ & $w_{1}$ & $m_{1}$ & $m_{2}$\\
$\slashed{o}$ & $\slashed{o}$ & $\slashed{o}$ & $\slashed{o}$\\
\end{tabular}
\end{table}

and population $p=(1.5,1,1,1)$. The unique stable match is

\[
\begin{bNiceMatrix}[first-row,first-col]
            & w_1 & w_2 & \varnothing \\
m_1         & 0   & 1   & 0.5        \\
m_2         & 1   & 0   & 0          \\
\varnothing & 0   & 0   & 0
\end{bNiceMatrix}.
\]
\end{example}

We are interested in the stable matching correspondence $f:\mathcal{P}\rightrightarrows\mathcal{Q}$ defined as
\[
f(p)=\{Q:Q\text{ is stable and feasible at }p\}.
\]

We recount the definitions of hemicontinuity.
\begin{defn}
Let $f$ be compact-valued. Then $f$ is \textbf{upper hemicontinuous (uhc)} at $p\in\mathcal{P}$ if and only if for every sequence $p^{n}\rightarrow p$ in $\mathcal{P}$ and every sequence $Q^{n}\in f(p^{n})$ such that $Q^{n}\rightarrow Q$, we have $Q\in f(p)$.
\end{defn}

\begin{defn}
The correspondence $f$ is \textbf{lower hemicontinuous (lhc)} at $p\in\mathcal{P}$ if and only if for every sequence $p^{n}\rightarrow p$ in $\mathcal{P}$ and every $Q\in f(p)$, there exists a subsequence $p^{m}$ and sequence $Q^{m}\in f(p^{m})$ such that $Q^{m}\rightarrow Q$.
\end{defn}
\begin{rem}
These are not the typical primitive definitions of uhc and lhc, but they are equivalent given the compact-valuedness of $f(\cdot)$. We will show that $f(p)$ is closed; boundedness is immediate.
\end{rem}

\section{Structure of \texorpdfstring{$f(p)$}{f(p)}}

Many results on the structure of stable matches are analogous to standard results for the original Gale and Shapley setting; we refer the reader to \citet{RS92}.

Using first-order stochastic dominance, we extend the preference $\succeq_{i}$ of each type $i$ to a partial order $\geq_{i}$ over matchings as follows. Let $Q^{1},Q^{2}\in f(p)$ be two matchings.
\begin{itemize}
\item $\forall m\in M:$ Let $Q_{m}^{1}\geq_{m}Q_{m}^{2}$ if for all $\bar{w}\in W\cup\{\emptyset\}$:
\[
\sum_{w\in W\cup\{\emptyset\}:w\succeq_{m}\bar{w}}Q_{m,w}^{1}\geq\sum_{w\in W\cup\{\emptyset\}:w\succeq_{m}\bar{w}}Q_{m,w}^{2}
\]
\item $\forall w\in W:$ Let $Q_{w}^{1}\geq_{w}Q_{w}^{2}$ if for all $\bar{m}\in M\cup\{\emptyset\}$:
\[
\sum_{m\in M\cup\{\emptyset\}:m\succeq_{w}\bar{m}}Q_{m,w}^{1}\geq\sum_{m\in M\cup\{\emptyset\}:m\succeq_{w}\bar{m}}Q_{m,w}^{2}
\]
\end{itemize}
Now we define a partial order for each side over matchings.
\begin{itemize}
\item Let $Q^{1}\geq_{M}Q^{2}$ if $\forall m\in M:Q_{m}^{1}\geq_{m}Q_{m}^{2}$
\item Let $Q^{1}\geq_{W}Q^{2}$ if $\forall w\in W:Q_{w}^{1}\geq_{w}Q_{w}^{2}$
\end{itemize}
The following are due to \citet{BB02}.
\begin{lem}[Existence. \citealp{BB02}. Corollary 1.]
For any $p$, $f(p)$ is nonempty.
\end{lem}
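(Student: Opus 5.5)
We prove nonemptiness of $f(p)$ with a continuous analogue of the Gale--Shapley deferred acceptance algorithm, in which men propose. The difficulty is that with real-valued masses a naive proposal process can take infinitely many steps. We therefore organize the process so that it terminates in finitely many \emph{phases}, each changing the combinatorial state.

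The state is a tentative matching $Q$, together with, for each man type $m$, a pointer to the woman he currently proposes to. Each $m$ starts with all of $p_m$ proposing to his top acceptable woman; unacceptable women and $\slashed{o}$ are handled by IR truncation of his list. Each woman $w$ tentatively holds the proposals she has received, filling her capacity $p_w$ greedily from her most preferred acceptable man type downward. She rejects the surplus mass and rejects any unacceptable proposer entirely. Rejected mass of $m$ moves to his next choice, and mass that exhausts the list goes to $\slashed{o}$.

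The key invariant is the continuous analogue of monotonicity. The set of men each $w$ holds mass from only improves, in the sense that her cutoff man type weakly rises. Correspondingly, each man's mass only moves down his list. Since a woman's cutoff can change only finitely often ($|M|+1$ levels), and each man's mass can sit at only finitely many positions, I would make termination rigorous as follows. I would not process rejections one unit at a time. Instead, within a phase where all cutoffs are fixed, I would compute the redistribution in one shot as the solution of a finite linear ``flow'' problem.

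An alternative way to get finiteness, which I expect to be the main obstacle to execute cleanly, is a limiting argument: rational $p$ reduces to the discrete Gale--Shapley problem after scaling and cloning agents. Rational populations are dense, so we could then take limits. The limit point is feasible since the constraints are linear. It is stable because the blocking condition uses strict inequalities $Q_{m,w'}>0$, $Q_{m',w}>0$, which could appear in the limit. This needs care: I would pass to a subsequence along which the supports are constant, so that the discrete stability constraints hold on a fixed support, and argue that limit positivity only shrinks the support.

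The cleanest route is the following. For rational $p$, scale to integers and split each type $i$ into $p_i$ identical unit agents. Gale--Shapley gives a stable discrete matching, and aggregating it gives a stable $Q$; a block in $Q$ would give a discrete block among clones. For general $p$, take rational $p^n\to p$ with $Q^n\in f(p^n)$. By boundedness, extract a convergent subsequence with constant support pattern $S$, and let $Q^n\to Q$. Then $Q$ is feasible at $p$ and $\supp(Q)\subseteq S$. Any block of $Q$ would use pairs in $\supp(Q)\subseteq S$, hence it would block every $Q^n$, which is a contradiction; IR passes to the limit in the same way. Hence $Q\in f(p)$. This last argument also establishes closedness, as promised in the remark.
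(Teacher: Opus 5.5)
Your final argument is correct, but it takes a different route from the cited result. The paper gives no proof of its own. It imports Corollary~1 of \citet{BB02}, where existence is obtained constructively: a deferred-acceptance-type algorithm is run directly on real-valued masses and shown to terminate in finitely many steps. That termination problem is exactly the one you raise and leave unexecuted in your first paragraph; since you don't use that sketch, you can delete it.

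Your route avoids the continuum algorithm altogether. Homogeneity (Lemma~\ref{lem:homog}) and cloning reduce rational $p$ to discrete Gale--Shapley, and density plus a closed-graph argument then handles general $p$. Two small points:
\begin{itemize}
\item To invoke Gale--Shapley, say that ties among clones of the same type are broken arbitrarily. This is harmless, because a type-level block involves strict preferences on both sides and so remains a block after tie-breaking.
\item The constant-support subsequence is unnecessary, since $Q_{m,w}>0$ already forces $Q^{n}_{m,w}>0$ for all large $n$. Your limit step is exactly the uhc argument of Theorem~\ref{thm:uhc}, which does not use existence, so there is no circularity. The same step with $p^{n}=p$ gives closedness of $f(p)$, as you note.
\end{itemize}

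As for what each approach buys: your proof is short, elementary, and reuses the paper's own machinery. However, it is nonconstructive for irrational $p$ and says nothing about which stable match you reach. The algorithmic route of \citet{BB02} computes an explicit stable match at every $p$, and in that work it is tied to the side-optimal matchings $Q^{M}$ and $Q^{W}$ that the paper relies on later.
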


\begin{lem}[Opposing interests. \citealp{BB02}. Theorems 5, 6, and 10.]
Let $Q^{1},Q^{2}\in f(p)$. Then $Q^{1}\geq_{M}Q^{2}$ if and only if $Q^{1}\leq_{W}Q^{2}$. Further, there exists a unique $\ge_{M}$-maximal matching, denoted $Q^{M}$, and a unique $\ge_{W}$-maximal matching $Q^{W}$.
\end{lem}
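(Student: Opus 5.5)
We would prove the three claims in this order: the join construction, then opposing interests, then the maximal elements, with the second feeding into the first. The basic tool is the following local consequence of stability. Fix $Q\in f(p)$ and a pair $(m,w)\in M\times W$ with $w\succ_m\slashed{o}$ and $m\succ_w\slashed{o}$. Then at least one of two things holds.
\begin{itemize}
\item Either all of $m$'s mass in $Q$ is at partners $\succeq_m w$, that is, $\sum_{w'\succeq_m w}Q_{m,w'}=p_m$.
\item Or all of $w$'s mass in $Q$ is at partners $\succeq_w m$.
\end{itemize}
Otherwise $m$ holds mass at some $w'\prec_m w$ and $w$ holds mass at some $m'\prec_w m$, and $(m,w)$ blocks. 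Unmatched mass counts as a partner $\slashed{o}$ here. In words, every acceptable pair is ``saturated'' on at least one side, and this is what we will manipulate.

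\textbf{Join construction.} For $Q^1,Q^2\in f(p)$, build a candidate $Q^1\vee_M Q^2$. For each man $m$, let $F^i_m(\bar w)=\sum_{w\succeq_m\bar w}Q^i_{m,w}$ be the cumulative distributions. Take $F_m=\max(F^1_m,F^2_m)$ pointwise and let $Q^\vee_m$ be the row whose cumulatives are $F_m$. This is legitimate because a pointwise maximum of nondecreasing cumulatives with common total $p_m$ is again such a cumulative.

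Dually, for each woman $w$, define $Q^\wedge_w$ by the pointwise minimum of her cumulatives. The core claim is that these agree, $Q^\vee_{m,w}=Q^\wedge_{m,w}$ for all $m,w$, so together they define a single matrix $Q^\vee$. We would then show directly that $Q^\vee$ is feasible and stable.
\begin{itemize}
\item \emph{Feasibility:} row sums come from the men's side and column sums from the women's side.
\item \emph{IR:} the support of $Q^\vee$ lies in $\supp Q^1\cup\supp Q^2$.
\item \emph{PS:} a blocking pair for $Q^\vee$ would, via the saturation dichotomy, block $Q^1$ or $Q^2$.
\end{itemize}

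\textbf{Main obstacle: the agreement $Q^\vee=Q^\wedge$.} This is the continuum analogue of the Conway/Knuth lattice argument, and we expect it to be the hard step. We would attempt it through a mass-balance argument. Partition $M\times W$ by which of $Q^1,Q^2$ each man prefers, in the stochastic-dominance sense, locally at each threshold. Using saturation, show that whenever $m$'s cumulative from $Q^1$ strictly exceeds that from $Q^2$ at $w$, the woman $w$ is saturated above $m$ in $Q^2$. Then sum masses over the relevant pairs. Total mass conservation (the row and column sums $p$) forces every inequality in the chain to be an equality, which yields the agreement entrywise. If this global counting proves messy, the fallback is to first treat rational $p$ and rational $Q^i$. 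Scaling to integers turns the problem into a discrete many-to-many stable matching, where the lattice is classical. We would then pass to limits using closedness of the stability conditions; this requires showing that rational stable matchings are dense in $f(p)$, which itself must be checked.

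\textbf{Opposing interests and maximal elements.} Opposing interests now follows from the join construction. If $Q^1\ge_M Q^2$, then $Q^1\vee_M Q^2=Q^1$, and by the agreement the same matrix is the women's pointwise meet, so $Q^1\le_W Q^2$. The converse direction is symmetric.

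For existence of the maximal elements, first note that $f(p)$ is compact. It is bounded by $p$, and closed because feasibility is a set of equalities and each blocking condition is an open condition on $Q$. Maximize the continuous functional $\Phi(Q)=\sum_m\sum_{\bar w}F_m(\bar w)$ over $f(p)$ and let $Q^M$ be a maximizer. For any $Q\in f(p)$, the join $Q^M\vee_M Q$ lies in $f(p)$ and satisfies $\Phi(Q^M\vee_M Q)\ge\Phi(Q^M)$, with equality only if $Q^M\ge_M Q$. Hence $Q^M$ is the $\ge_M$-maximum, which gives both existence and uniqueness. The same argument with the women's join yields $Q^W$, and by opposing interests $Q^W$ is also the $\ge_M$-minimum.
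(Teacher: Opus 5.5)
\textbf{Verdict: genuine gap.} The paper gives no proof of this lemma; it imports it from \citet{BB02}. So your route, lattice join plus maximizing a monotone functional over the compact set $f(p)$, is an independent one, and its peripheral steps are correct. The pointwise max of cumulatives is a legitimate row with support in $\supp Q^1\cup\supp Q^2$. The set $f(p)$ is closed because IR violations and blocking are open conditions. Once joins stay in $f(p)$, a $\Phi$-maximizer is a $\ge_M$-greatest element, hence the unique maximal one. The gap is the agreement $Q^\vee_{m,w}=Q^\wedge_{m,w}$ on $M\times W$. That identity is the whole lattice theorem here, and everything rests on it:
\begin{itemize}
\item the column sums of $Q^\vee$, i.e.\ its feasibility;
\item its pairwise stability, since your blocking argument needs $w$'s column in $Q^\vee$ to be her pointwise-min column;
\item hence existence and uniqueness of $Q^M$.
\end{itemize}
Yet you only name a strategy. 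The dichotomy gives one-sided implications cell by cell, e.g.\ $F^1_m(w)>F^2_m(w)\Rightarrow G^2_w(m)=p_w$ with $G^i_w(\bar m)=\sum_{m\succeq_w\bar m}Q^i_{m,w}$. But you never write down the chain of inequalities whose sum is supposed to be tight. Conway's template also does not transfer directly: it uses agents who prefer one matching outright and shows they are matched only to opposite agents preferring the other. Here you have no agent-level ranking of the two rows, and a man can hold positive mass with the same woman in both matchings in different amounts.

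Your fallback can be completed, but it needs two missing pieces.

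First, for irrational $p$ the set $f(p)$ contains no rational points, so ``rational stable matchings are dense in $f(p)$'' is the wrong target. You need a joint approximation at moving populations. Stability depends only on support, so any pair with $\supp Q^{i,n}=\supp Q^i$ and common marginals $p^n$ lies in $f(p^n)$. The constraints ``zero off $\supp Q^i$, equal row and column sums'' define a rational linear subspace whose relatively open positive part contains $(Q^1,Q^2)$. So rational such pairs converge to $(Q^1,Q^2)$; clear denominators to get integers. Agreement then passes to the limit because $\max$ and $\min$ of cumulatives are continuous.

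Second, ``the lattice is classical'' must produce exactly pointwise $\max$ and $\min$. One way is cloning:
\begin{itemize}
\item split type $i$ into clones $i^1,\dots,i^{k_i}$, and let every agent on the other side rank clones of one type by index;
\item give $i^a$ the $a$-th best unit of $i$'s integer allocation, and pair the two clone blocks inside each cell in order.
\end{itemize}
One checks this maps integer stable allocations to stable one-to-one matchings, and that aggregation preserves stability. In the clone market, Conway's join gives $m^a$ a partner of type $\succeq_m\bar w$ iff $a\le F^1_m(\bar w)$ or $a\le F^2_m(\bar w)$. So the aggregate of the join has men's cumulatives $\max(F^1_m,F^2_m)$ and, symmetrically, women's cumulatives $\min(G^1_w,G^2_w)$, which is your agreement.

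Finally, do not route the first sentence of the lemma through this hard step; it has a direct proof. Suppose $Q^1\ge_M Q^2$, and let $\bar m$ be the $\succ_w$-first threshold with $G^1_w(\bar m)>G^2_w(\bar m)$. Then $Q^1_{\bar m,w}>Q^2_{\bar m,w}$, and $w$ has $Q^2$-mass strictly below $\bar m$, so $\bar m\neq\slashed{o}$ by IR. Also $F^2_{\bar m}(w)<p_{\bar m}$; otherwise the cumulative just above $w$ would be larger under $Q^2$ than under $Q^1$. Hence $(\bar m,w)$ blocks $Q^2$.
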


\begin{lem}[Rural hospitals. \citealp{BB02}. Lemmas 8--9.]
\label{lem:rural_hosp}
In any $Q^{1},Q^{2}\in f(p)$,
\begin{align*}
Q_{m,\slashed{o}}^{1} & =Q_{m,\slashed{o}}^{2} & \forall m\\
Q_{\slashed{o},w}^{1} & =Q_{\slashed{o},w}^{2} & \forall w.
\end{align*}
Further, if $m$ has $Q_{m,\slashed{o}}>0$ in some $Q\in f(p)$, then $Q_{m,w}=Q_{m,w}^{'}$ for all $Q'\in f(p)$. The analogous is true for any $w$.
\end{lem}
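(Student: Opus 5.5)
The plan is to prove the two parts of Lemma~\ref{lem:rural_hosp} in order: first that the unmatched masses agree across all of $f(p)$, then that agents with positive unmatched mass have identical match vectors. The main tools are the opposing-interests lemma, used against the extremal matchings $Q^{M}$ and $Q^{W}$, together with a mass-counting argument.

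\emph{Equal unmatched masses.} Fix any $Q\in f(p)$ and compare it with $Q^{M}$. Since $Q^{M}\geq_{M}Q$, each man $m$ weakly prefers $Q^{M}_m$ to $Q_m$ in the stochastic-dominance sense. Apply the inequality with $\bar w$ equal to the least acceptable woman, i.e. the one ranked immediately above $\slashed{o}$. By individual rationality, $m$'s mass on acceptable women is $p_m-Q_{m,\slashed{o}}$. This gives $Q^{M}_{m,\slashed{o}}\le Q_{m,\slashed{o}}$ for every $m$.

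By opposing interests, $Q^{M}\leq_{W}Q$, and the same argument on the women's side gives $Q^{M}_{\slashed{o},w}\ge Q_{\slashed{o},w}$ for every $w$. Now count total matched mass in two ways. Summing over men,
\[\sum_{m,w}Q_{m,w}=\sum_m p_m-\sum_m Q_{m,\slashed{o}},\]
and summing over women gives the analogous identity. Hence the total matched mass under $Q^{M}$ is weakly larger when counted from the men's side and weakly smaller when counted from the women's side. So the two totals coincide, and every one of the inequalities above must hold with equality. This proves the first display.

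\emph{Agents with unmatched mass.} Suppose $Q_{m,\slashed{o}}>0$. By the first part, $Q'_{m,\slashed{o}}=Q_{m,\slashed{o}}>0$ for every $Q'\in f(p)$. I would show $Q_{m,w}=Q^{M}_{m,w}$ for all $w$; applying this to both $Q$ and $Q'$ gives $Q_{m,\cdot}=Q'_{m,\cdot}$.

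Stochastic dominance together with equal acceptable mass already gives $Q^{M}_m\geq_m Q_m$ with equal totals. Suppose toward contradiction that they differ. Then there is a woman $w$ with $Q^{M}_{m,w}>Q_{m,w}$: take the most preferred woman at which the cumulative sums first differ, which strict dominance at that threshold guarantees. Since $m$ holds unmatched mass in $Q$, stability of $Q$ forbids $(m,w)$ from blocking with $m'$ such that $Q_{m',w}>0$. So every man $m'$ in $\supp(Q_{\cdot,w})$ satisfies $m'\succ_w m$, and $w$ has no unmatched mass in $Q$ if she finds $m$ acceptable. Hence $w$ is filled in $Q$ by men she likes more than $m$. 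Because $Q^{M}\le_W Q$ and $Q^{M}$ gives $w$ strictly more of $m$, the cumulative sum of $w$ down to men ranked just above $m$ must strictly drop in $Q^{M}$. This contradicts $Q\le_W Q^{M}$ being false only if we can exploit that drop.

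\emph{Main obstacle.} This contradiction step is where I expect the real work. The cleanest route is a direct blocking argument on $Q^{M}$ rather than pure dominance. Some man $m'\succ_w m$ loses mass at $w$ in $Q^{M}$ relative to $Q$, so $m'$ must be placed elsewhere. One then chases the displaced mass through a cycle of such men and women, as in the classical rural-hospitals proof. Since all men's totals are fixed, the chain must close up. Along it, someone, and ultimately $m$ with his unmatched mass, forms a blocking pair in $Q$ or $Q^{M}$. I would formalize this via the difference matrix $D=Q^{M}-Q$: its row and column sums vanish on $M\times W$, so it decomposes into alternating cycles, and one checks that a cycle through $(m,w)$ yields a block. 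The statement for women follows by symmetry.
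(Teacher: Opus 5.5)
Your first part is correct. It is the standard counting argument: moving to $Q^{M}$, men weakly gain matched mass and women weakly lose it, yet both totals measure the same quantity. It legitimately uses Lemma 2, which precedes this lemma (the paper itself gives no proof and imports the result). The second part, however, has a genuine gap.

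At the woman $w$ you select, where $Q^{M}_{m,w}>Q_{m,w}$, no contradiction is available. All of $w$'s mass lies on men $\succeq_{w}m$; note $\succeq_w$, not $\succ_w$, since $m$ himself may be in the support. So shifting weight onto $m$, her worst man, is exactly what $Q\geq_{W}Q^{M}$ allows, and the \emph{drop} you find is consistent with opposing interests. Your fallback, a cycle decomposition of $Q^{M}-Q$ in which a cycle through $(m,w)$ is said to yield a block, is only asserted. You never identify which pair blocks which matching, and this is the step you yourself flag as the real work. As written, the second claim is unproved.

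The missing idea is to apply the women's dominance at \emph{every} woman, and to look for the contradiction where $Q$ gives $m$ more, not less. Fix $w$ with $m,w$ mutually acceptable; otherwise $Q_{m,w}=Q^{M}_{m,w}=0$ by IR. By your first part, $Q_{m,\slashed{o}}=Q^{M}_{m,\slashed{o}}>0$. So in both $Q$ and $Q^{M}$ the pair $(m,w)$ cannot block with $w'=\slashed{o}$. Hence $w$ has no unmatched mass, and all her mass is on men $\succeq_{w}m$ in both matchings. Take the threshold just above $m$ in $Q_{w}\geq_{w}Q^{M}_{w}$ (the inequality is trivially $0\geq 0$ if $m$ is her top choice). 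This gives
\[ p_{w}-Q_{m,w}=\sum_{m'\succ_{w}m}Q_{m',w}\;\geq\;\sum_{m'\succ_{w}m}Q^{M}_{m',w}=p_{w}-Q^{M}_{m,w}, \]
that is, $Q^{M}_{m,w}\geq Q_{m,w}$ for every $w\in W$. Since $\sum_{w}Q_{m,w}=p_{m}-Q_{m,\slashed{o}}=\sum_{w}Q^{M}_{m,w}$, equality holds in every coordinate. Thus every stable matching gives $m$ the row of $Q^{M}$.

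Equivalently, in your contradiction framing: if the rows differed with equal totals, some $w''$ would have $Q_{m,w''}>Q^{M}_{m,w''}$, violating the display at $w''$. No cycle chasing is needed. The women's case is symmetric, using $Q^{M}\geq_{M}Q$ at the threshold just above $w$ in each mutually acceptable man's list.
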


We now present original results that are useful for proving the main result. First we introduce an alternative characterization of $f(p)$. The following is straightforward from the definition of stability.
\begin{lem}
\label{lem:support}
Fix $(M,W,\succeq,p)$. Let $Q,Q'$ be two feasible matchings such that $\supp(Q')\subseteq\supp(Q)$. If $Q\in f(p)$ then $Q'\in f(p)$. Further, if $\supp(Q')=\supp(Q)$, then $Q\in f(p)\iff Q'\in f(p)$.
\end{lem}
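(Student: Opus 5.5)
The plan is to observe that stability, meaning both pairwise stability and individual rationality, depends on $Q$ only through $\supp(Q)$. Once that is established, both claims of Lemma~\ref{lem:support} follow by checking that every witness to instability for $Q'$ is also a witness for $Q$. Feasibility of $Q'$ is assumed, so only stability needs to be checked.

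First, individual rationality. Suppose $Q\in f(p)$ and take any $(m,w)\in M\times W$ with $Q'_{m,w}>0$. Then $(m,w)\in\supp(Q')\subseteq\supp(Q)$, so $Q_{m,w}>0$. By IR of $Q$ we get $w\succeq_m\slashed{o}$ and $m\succeq_w\slashed{o}$, hence $Q'$ is IR.

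Second, pairwise stability, argued by contrapositive. Suppose $(m,w)$ blocks $Q'$. Then there is some $(m',w')\in(M\cup\slashed{o})\times(W\cup\slashed{o})$ with $Q'_{m,w'}>0$, $Q'_{m',w}>0$, $w\succ_m w'$ and $m\succ_w m'$. The positivity conditions say $(m,w'),(m',w)\in\supp(Q')\subseteq\supp(Q)$, so $Q_{m,w'}>0$ and $Q_{m',w}>0$. The preference conditions do not involve the matching at all. So the same $(m,w)$, with the same witness $(m',w')$, blocks $Q$, contradicting $Q\in f(p)$.

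One bookkeeping point: the support must be read over the full index set $(M\cup\slashed{o})\times(W\cup\slashed{o})$, including the unmatched row and column, because the witnesses $m'$ or $w'$ may be $\slashed{o}$. I take the definition of $\supp$ in this sense, since $Q$ is indexed over that set.

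For the second statement, if $\supp(Q')=\supp(Q)$ then each support contains the other. Applying the first part in both directions gives $Q\in f(p)\iff Q'\in f(p)$. There is no real obstacle here; the only step needing care is the reading of the support convention just described.
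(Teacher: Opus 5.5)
Your proof is correct and is the argument the paper has in mind: the paper omits the proof, noting only that a smaller support leaves fewer conditions to check for stability, and you carry out that check explicitly for IR and for each blocking witness. Your remark that $\supp$ must range over the full index set, including the $\slashed{o}$ row and column, is a genuine point. Under a reading restricted to $M\times W$, the all-unmatched matching in Example~\ref{exa:2} would satisfy the hypothesis yet fail to be stable.
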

\begin{proof}
Omitted.
\end{proof}
The above also suffices to show that $f(\cdot)$ is homogeneous of degree one.
\begin{lem}
\label{lem:homog}
Let $\lambda>0$. Then $f(\lambda p)=\lambda f(p)$.
\end{lem}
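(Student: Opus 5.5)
We prove the two inclusions $\lambda f(p)\subseteq f(\lambda p)$ and $f(\lambda p)\subseteq\lambda f(p)$. The second follows from the first applied to $\lambda p$ and the scalar $1/\lambda$: it gives $\lambda^{-1}f(\lambda p)\subseteq f(p)$, hence $f(\lambda p)\subseteq\lambda f(p)$. So it suffices to show that $Q\in f(p)$ implies $\lambda Q\in f(\lambda p)$ for every $\lambda>0$.

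First, feasibility is linear. If $\sum_{w\in W\cup\slashed{o}}Q_{m,w}=p_m$ for all $m$ and $\sum_{m\in M\cup\slashed{o}}Q_{m,w}=p_w$ for all $w$, then multiplying each equation by $\lambda$ gives the feasibility constraints at $\lambda p$ for $\lambda Q$. Nonnegativity is preserved since $\lambda>0$, and the convention $(\lambda Q)_{\slashed{o},\slashed{o}}=0$ still holds.

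Second, stability depends only on the support. The blocking condition, pairwise stability and individual rationality refer to $Q$ only through the strict inequalities $Q_{m,w'}>0$ and $Q_{m',w}>0$, i.e.\ through $\supp(Q)$. Since $\lambda>0$, we have $\supp(\lambda Q)=\supp(Q)$. Hence $(m,w)$ blocks $\lambda Q$ if and only if it blocks $Q$, and $\lambda Q$ is IR if and only if $Q$ is. This is exactly the content of Lemma~\ref{lem:support}, except that the population changes from $p$ to $\lambda p$. So rather than citing Lemma~\ref{lem:support} directly (it fixes $p$), we note that its argument shows stability is a property of the support together with the preferences, independent of $p$. Feasibility at $\lambda p$ was checked separately above, so $\lambda Q\in f(\lambda p)$.

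Nothing here is a real obstacle. The only subtlety is that Lemma~\ref{lem:support} is stated for a fixed population, so the proof has to separate "feasible at $\lambda p$" from "stable", and note explicitly that the latter depends on $Q$ only through $\supp(Q)$.
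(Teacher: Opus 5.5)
Your proof is correct and is essentially the paper's intended argument: the paper omits the proof but says the support observation behind Lemma~\ref{lem:support} ``suffices,'' which is exactly your combination of support-invariance of stability with linearity of feasibility, plus the $1/\lambda$ step for the reverse inclusion. Your remark that Lemma~\ref{lem:support} is stated for a fixed population, so feasibility at $\lambda p$ must be checked separately from the support-based stability conditions, is a correct and careful clarification.
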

\begin{proof}
Omitted.
\end{proof}
In words, two feasible matchings with the same support have equivalent stability, regardless of the actual masses $Q_{m,w}$. Given a stable match $Q$, a feasible match with a smaller support is also stable. This is because by definition, there are fewer conditions to check for stability. In particular, if $Q\in f(p)$, then the set
\[
\left\{ Q':\begin{array}{cc}
Q'_{m,w}\geq0 & \text{if }(m,w)\in\supp(Q)\\
Q'_{m,w}=0 & \text{otherwise}
\end{array}\right\}
\]
is stable. Observe that this set is closed and bounded (since each $Q'$ respects the population $p$). Since there are finite possible supports of stable matchings given a population $p$, taking the union across supports shows that $f(p)$ is compact.
\begin{cor}
\label{cor:compact}
$f(p)$ is compact.
\end{cor}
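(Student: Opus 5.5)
We need to show $f(p)$ is compact, that is, bounded and closed in $\mathcal{Q}$. Boundedness is immediate from feasibility: every $Q\in f(p)$ has nonnegative entries with row sums $p_m$ and column sums $p_w$. Hence each entry is bounded by $\max_i p_i$, and $Q_{\slashed{o},\slashed{o}}=0$.

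For closedness I will use Lemma~\ref{lem:support} to write $f(p)$ as a finite union of closed sets. Let $\mathcal{S}$ be the collection of subsets $S\subseteq(M\cup\slashed{o})\times(W\cup\slashed{o})$ that occur as $\supp(Q)$ for some $Q\in f(p)$. There are only finitely many subsets of this finite index set, so $\mathcal{S}$ is finite. For each $S\in\mathcal{S}$ define
\[
F_S=\left\{Q'\in\mathcal{Q}: Q'\text{ feasible at }p,\ Q'_{m,w}=0\text{ for }(m,w)\notin S\right\}.
\]
Each $F_S$ is the intersection of the nonnegative orthant with finitely many linear equalities: the feasibility constraints and the vanishing of the entries outside $S$. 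It is therefore a closed polytope.

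Next I show $f(p)=\bigcup_{S\in\mathcal{S}}F_S$. For the inclusion $\subseteq$, take $Q\in f(p)$; then $Q\in F_{\supp(Q)}$ and $\supp(Q)\in\mathcal{S}$. For the inclusion $\supseteq$, take $Q'\in F_S$ with $S=\supp(Q)$ for some $Q\in f(p)$. Then $Q'$ is feasible and $\supp(Q')\subseteq S=\supp(Q)$, so the first part of Lemma~\ref{lem:support} gives $Q'\in f(p)$. A finite union of closed sets is closed, so $f(p)$ is closed and bounded, and hence compact by Heine--Borel.

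The only step with real content is the inclusion $F_S\subseteq f(p)$, and that is exactly Lemma~\ref{lem:support}, whose proof the paper omits. If I had to verify it, I would check the two conditions directly. Both the blocking condition and IR refer to $Q$ only through which entries are positive. A blocking pair for $Q'$ needs $Q'_{m,w'}>0$ and $Q'_{m',w}>0$, which forces $Q_{m,w'}>0$ and $Q_{m',w}>0$, so the same pair would block $Q$. Likewise, any IR violation in $Q'$ occurs at a positive entry, which is also positive in $Q$, so it would be an IR violation in $Q$. Hence stability passes from $Q$ to $Q'$.
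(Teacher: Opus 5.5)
Your proposal is correct and follows essentially the same route as the paper. Both write $f(p)$ as a finite union, over the finitely many supports of stable matchings, of closed and bounded sets of feasible matrices vanishing off a given support, with Lemma~\ref{lem:support} giving the inclusion of each such set in $f(p)$; your explicit feasibility constraint in $F_S$ (which the paper's displayed set leaves implicit) and your direct check of the omitted Lemma~\ref{lem:support} are both sound.
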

\begin{proof}
Let
\[
\mathcal{S}:=\bigcup_{Q\in f(p)}\{\supp(Q)\}
\]
which is finite, since $\mathcal{S}\subseteq2^{(M\cup\slashed{o})\times(W\cup\slashed{o})}$. Then we can write
\[
f(p)=\bigcup_{S\in\mathcal{S}}\left\{ Q':\begin{array}{cc}
Q'_{m,w}\geq0 & \text{if }(m,w)\in S\\
Q'_{m,w}=0 & \text{otherwise}
\end{array}\right\} ,
\]
which is a finite union of closed and bounded sets.
\end{proof}
The next example illustrates a case of multiple stable matches, as well as some of the above properties.
\begin{example}
\label{exa:2}
Continuing from Example \ref{exa:Ex}, consider the same types and preferences with population $p=(1,1,1,1)$. The set of stable matches is

\[
Q^{t} =
\begin{bNiceMatrix}[first-row,first-col]
            & w_1 & w_2 & \varnothing \\
m_1         & t   & 1-t   & 0        \\
m_2         & 1-t  & t   & 0          \\
\varnothing & 0   & 0   & 0
\end{bNiceMatrix}
\]
for $t\in[0,1]$.

Observe that for $t\geq t'$, we have $Q^{t}\geq_{M}Q^{t'}$ and $Q^{t}\leq_{W}Q^{t'}$.
\end{example}

\section{Continuity}

We now turn to the main results. The first result is that $f$ is uhc everywhere and thus continuous when it is single-valued.
\begin{thm}
\label{thm:uhc}
The stable matching correspondence $f:\mathcal{P}\rightrightarrows\mathcal{Q}$ is upper hemicontinuous everywhere. If $|f(p)|=1$, then $f(\cdot)$ is continuous at $p$.
\end{thm}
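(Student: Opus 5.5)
We prove upper hemicontinuity directly from the sequential definition, using that stability is a condition on the support alone. Fix $p\in\mathcal{P}$, a sequence $p^{n}\to p$, and $Q^{n}\in f(p^{n})$ with $Q^{n}\to Q$. We must show that $Q$ is feasible at $p$ and stable.

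\emph{Feasibility.} This is immediate. The constraints $\sum_{w\in W\cup\slashed{o}}Q^{n}_{m,w}=p^{n}_{m}$ and $\sum_{m\in M\cup\slashed{o}}Q^{n}_{m,w}=p^{n}_{w}$ are linear equalities, and nonnegativity together with $Q^{n}_{\slashed{o},\slashed{o}}=0$ are closed conditions. All of these pass to the limit, so $Q$ is feasible at $p$.

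\emph{Stability.} The key observation is that support can only shrink in the limit. Since there are finitely many possible supports, we may pass to a subsequence along which $\supp(Q^{n})=S$ is constant. If $Q_{m,w}>0$, then $Q^{n}_{m,w}>0$ for all large $n$, so $(m,w)\in S$; hence $\supp(Q)\subseteq S$.

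\emph{IR.} If $Q_{m,w}>0$ with $m\in M$, $w\in W$, then $(m,w)\in S$. Since $Q^{n}$ is IR, the pair must be mutually acceptable.

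\emph{PS.} Suppose $(m,w)$ blocks $Q$ via $(m',w')$. Then $Q_{m,w'}>0$ and $Q_{m',w}>0$, so both pairs lie in $S$. The preference comparisons $w\succ_{m}w'$ and $m\succ_{w}m'$ do not depend on the population. Hence $(m,w)$ also blocks $Q^{n}$ for large $n$, contradicting $Q^{n}\in f(p^{n})$.

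This is essentially the monotonicity-in-support reasoning behind Lemma~\ref{lem:support}, except that it is applied across populations. That is fine, because the definitions of blocking and IR refer only to the support and the fixed preferences, never to $p$. Therefore $Q\in f(p)$. Since $f(p)$ is compact by Corollary~\ref{cor:compact}, the sequential characterization of uhc applies, and $f$ is uhc at every $p$.

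\emph{Continuity when $f(p)=\{Q^{*}\}$.} It remains to show lhc. Take $p^{n}\to p$. By the existence lemma we can pick $Q^{n}\in f(p^{n})$. These are bounded: every entry is at most $\max_{i}p^{n}_{i}$, which is bounded since $p^{n}$ converges. So some subsequence $Q^{n_k}$ converges to a limit $\bar{Q}$. By the uhc just proved, $\bar{Q}\in f(p)=\{Q^{*}\}$. This subsequence and selection therefore satisfy the stated definition of lhc. Hence $f$ is continuous at $p$.

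The only potentially delicate point is the support step: one must be sure that stability can be checked on $\supp(Q)\subseteq S$ and that blocking in the limit transfers back to the sequence. Because the blocking conditions are monotone in the support, this transfer goes through.
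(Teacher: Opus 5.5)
Your proof is correct and follows essentially the same route as the paper: feasibility passes to the limit by linearity, IR and PS transfer because any positive entry of the limit $Q$ is positive in $Q^{n}$ for large $n$ (preferences do not depend on $p$), and continuity at a singleton follows from boundedness of $Q^{n}$ plus uhc. The differences are cosmetic: your passage to a constant-support subsequence is harmless but unnecessary, and you verify lhc with a single convergent subsequence, which suffices under the paper's subsequence definition, whereas the paper shows the whole sequence $Q^{n}$ converges to the unique stable match.
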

\begin{proof}
Let $p^{n}\rightarrow p$ be a convergent sequence in $\mathcal{P}$. Let $Q^{n}\in f(p^{n})$ be a convergent sequence such that $Q^{n}\rightarrow Q$. We wish to show $Q\in f(p)$. That is, we show $Q$ is feasible, IR, and PS at $p$.
\begin{itemize}
\item \textbf{Feasibility.} Since row and column sums of $Q$ are continuous functions of $p$, feasibility is preserved in the limit.
\item \textbf{IR.} Suppose there is $(m,w)$ such that $Q_{m,w}>0$ but $\slashed{o}\succ_{m}w$. Then for large enough $n$, $Q_{m,w}^{n}>0$, else $Q^{n}\not\rightarrow Q$, a contradiction to stability of $Q^{n}$. The $w$ case is symmetric.
\item \textbf{PS.} Suppose $(m,w)$ is a blocking pair for $Q$. Then there exists $(m',w')$ such that
\[
Q_{m,w'}>0,\quad Q_{m',w}>0,\quad w\succ_{m}w',\quad m\succ_{w}m'
\]
But then for large enough $n$, $Q_{m,w'}^{n}>0$ and $Q_{m',w}^{n}>0$, contradicting stability of $Q^{n}$.
\end{itemize}
When $|f(p)|=1$ and $f(\cdot)$ is compact-valued, equivalence of uhc to continuity is standard. Let $p^{n}\rightarrow p$, and choose any $Q^{n}\in f(p^{n})$. Since $p^{n}$ is bounded, the corresponding matchings are bounded, and $Q^{n}$ has a convergent subsequence. By uhc, its limit must lie in $f(p)$. Since $f(p)=\{Q\}$, every limit of a convergent subsequence is $Q$. Hence $Q^{n}\rightarrow Q$, as desired.
\end{proof}
The second and more difficult result is that $f$ is lhc at $p$ if and only if $f(p)$ is single-valued.
\begin{thm}
\label{thm:lhc}
The stable matching correspondence $f:\mathcal{P}\rightrightarrows\mathcal{Q}$ is lower hemicontinuous at $p$ if and only if $|f(p)|=1$.
\end{thm}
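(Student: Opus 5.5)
The ``if'' direction is already contained in Theorem~\ref{thm:uhc}: if $|f(p)|=1$, then $f$ is continuous at $p$, and in particular lower hemicontinuous. So the work is in the ``only if'' direction. Suppose $|f(p)|\geq 2$; I will exhibit a sequence $p^{n}\rightarrow p$ and a point $Q\in f(p)$ that cannot be approached. The plan is to choose $p^{n}$ so that each $f(p^{n})$ is a singleton $\{Q^{n}\}$.

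First, the limiting behaviour of such a sequence is forced. The $Q^{n}$ are bounded because $p^{n}$ is bounded, so after passing to a subsequence $Q^{n}\rightarrow Q^{*}$, and Theorem~\ref{thm:uhc} gives $Q^{*}\in f(p)$. Since $|f(p)|\geq2$, pick $Q\in f(p)$ with $Q\neq Q^{*}$. Lower hemicontinuity along this subsequence would require a further subsequence $p^{m}$ and $Q^{m}\in f(p^{m})$ with $Q^{m}\rightarrow Q$. But $f(p^{m})=\{Q^{m}\}$ forces those $Q^{m}$ to be the terms of the original subsequence, which converges to $Q^{*}\neq Q$. This is a contradiction, so $f$ is not lhc at $p$.

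It remains to construct $p^{n}\rightarrow p$ with $|f(p^{n})|=1$; this is the main step. I would use the strong nondegeneracy condition of \citet{BB02}: a population is strongly nondegenerate if there are no $S\subseteq M$, $T\subseteq W$, not both empty, with $\sum_{m\in S}p_{m}=\sum_{w\in T}p_{w}$. Under this condition \citet{BB02} guarantee a unique stable matching, and I would take that result as given (it is also what Section~\ref{subsec:Uniqueness} formalizes). To reach such populations, fix a direction $\alpha\in\mathbb{R}^{M\cup W}$ whose coordinates are linearly independent over $\mathbb{Q}$, and set $p^{\varepsilon}=p+\varepsilon\alpha$.

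For each pair $(S,T)$, the degeneracy equation reads
\[
\sum_{m\in S}p_{m}-\sum_{w\in T}p_{w}+\varepsilon\Bigl(\sum_{m\in S}\alpha_{m}-\sum_{w\in T}\alpha_{w}\Bigr)=0.
\]
The coefficient of $\varepsilon$ is a nontrivial $\pm1$ combination of the $\alpha_i$, so it is nonzero. Hence each equation holds for at most one value of $\varepsilon$, and since there are finitely many pairs $(S,T)$, only finitely many $\varepsilon$ are bad. I can therefore choose $\varepsilon_{n}\downarrow0$ avoiding all of them, small enough that $p^{n}=p+\varepsilon_{n}\alpha\in\mathbb{R}_{++}^{M\cup W}$. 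Each such $p^{n}$ is strongly nondegenerate, so $|f(p^{n})|=1$ and $p^{n}\rightarrow p$.

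The only delicate point is confirming that the \citet{BB02} uniqueness result applies in exactly this unconstrained setting, with agents allowed to be undermatched and unacceptable partners present. If it needed adjustment, I would instead check uniqueness directly. The plan would be to show that two distinct stable matchings have a nonzero difference $Q^{1}-Q^{2}$ that induces a closed alternating cycle on the support. Summing masses around that cycle would then produce subsets $S,T$ with equal total population, contradicting strong nondegeneracy.
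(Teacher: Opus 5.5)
Your proof is correct, but it takes a different route from the paper.

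\textbf{The paper's route.} The paper works at $p$ itself. It takes the $Q^{\alpha}\in f(p)$ of Lemma~\ref{lem:cycle}, whose support graph contains a cycle, and perturbs a single coordinate, $p^{n}=p+\frac{1}{n}e_{m_{1}}$. Any $Q^{n}\in f(p^{n})$ near $Q^{\alpha}$ would contain that cycle in its support. By Lemma~\ref{lem:tree}, the component containing the cycle avoids $\slashed{o}$ and is therefore fully matched internally. This forces $\frac{1}{n}$ to equal one of finitely many fixed numbers $\sum_{w\in W^{n}}p_{w}-\sum_{m\in M^{n}}p_{m}$, a contradiction.

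\textbf{Your route.} You reduce the ``only if'' direction to a soft fact. If $f$ is single-valued along some $p^{n}\to p$, then after extracting a convergent subsequence of the unique selections, only their limit $Q^{*}$ can be approached. So lhc fails once $|f(p)|\geq 2$. Your subsequence bookkeeping is right, and the appeal to uhc to place $Q^{*}$ in $f(p)$ is not even needed. The combinatorial content is outsourced to two ingredients:
\begin{itemize}
\item the strong nondegeneracy lemma of \citet{BB02}, which the paper itself states in Section~\ref{subsec:Uniqueness};
\item density of strongly nondegenerate populations, which your $\mathbb{Q}$-independent direction handles correctly (the paper's proposition in that section also supplies it).
\end{itemize}

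\textbf{Comparison.} Your route is shorter, and it shows that along your extracted sequence every point of $f(p)$ except $Q^{*}$ is unreachable. The paper's route gives an explicit one-coordinate perturbation and an explicit non-approachable match. It also needs only Lemmas~\ref{lem:cycle} and~\ref{lem:tree}, not Theorem 9 of \citet{BB02}.

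\textbf{On the point you flagged.} Your fallback is not quite right as stated: an alternating cycle in $\supp(Q^{1}-Q^{2})$ does not by itself produce $p_{S}=p_{T}$. The clean in-house verification is the paper's own mechanism. Lemma~\ref{lem:cycle} gives a stable match with a cycle in its support. By Lemma~\ref{lem:tree}, the component containing that cycle avoids $\slashed{o}$, so it is fully matched internally. That yields $p_{M'}=p_{W'}$ for some nonempty $M'\subseteq M$ and $W'\subseteq W$, contradicting strong nondegeneracy.
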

The ``if'' direction is a corollary of Theorem \ref{thm:uhc}. The remainder is in the following subsection.

\subsection{Proof of Theorem \ref{thm:lhc}}

We first introduce a graph construction. Fix $(M,W,\succeq,p)$ and let $Q\in f(p)$. Define the graph $G(Q)=(N,E)$ with nodes
\[
N=M\cup W\cup\slashed{o}
\]
and edges
\[
(m,w)\in E\iff(m,w)\in\supp(Q).
\]
By Lemma \ref{lem:support}, any feasible matrix $Q'$ such that $G(Q)=G(Q')$ is also stable.

Supposing $f(p)$ is single-valued, so $f(p)=\{Q\}$, the graph $G(Q)$ is acyclic; the converse is also true. (The converse is the important direction for our proof.)
\begin{lem}
\label{lem:acyclic}
If $f(p)=\{Q\}$, then $G(Q)$ is acyclic.
\end{lem}
\begin{proof}
Toward a contradiction, suppose $G(Q)$ contains a cycle, denoted
\[
\Gamma=\{(m_{1},w_{1}),(m_{2},w_{1}),(m_{2},w_{2}),\ldots,(m_{K},w_{K}),(m_{1},w_{K})\}
\]
where each edge $\gamma\in\Gamma$ has $Q_{\gamma}>0$. Let
\[
D=\sum_{k=1}^{K}(e_{m_{k},w_{k}}-e_{m_{k+1},w_{k}})
\]
where $e_{m_{k},w_{k}}$ is the basis vector in $\mathcal{Q}$ corresponding to $(m_{k},w_{k})$. That is, $D$ adds mass to odd edges and subtracts equivalent mass from even edges. For $\alpha\in[0,\min_{\gamma\in\Gamma}Q_{\gamma})$, the matching
\[
Q^{\alpha}=Q+\alpha D
\]
is feasible and has the same support $\supp(Q^{\alpha})=\supp(Q)$. Thus $Q^{\alpha}\in f(p)$, a contradiction.
\end{proof}
Supposing $f(p)$ is multi-valued, the following result identifies a cycle along which mass can flow while remaining stable.

First, we describe \citet{BB02}'s man-improving algorithm. Starting from any stable matching, the algorithm either certifies man type $m$ as optimal, or finds a directed cycle that can improve $m$. The usefulness of the algorithm (for our purpose) is that it generates a convex set of stable matches along a cycle; the welfare aspect and exact construction is unimportant for us. Readers may safely skip to Lemma \ref{lem:cycle}, which contains the needed implication for our proof.

Given a stable matching $Q$ and mutually acceptable pair $(m,w)$, call $(m,w)$ man-stable if
\[
\sum_{w'\succeq_{m}w}Q_{m,w'}=p_{m}
\]
and woman-stable if
\[
\sum_{m'\succeq_{w}m}Q_{m',w}=p_{w}
\]
The algorithm constructs a chain of men as follows:
\begin{enumerate}
\item Let $M^{*}$ be the subset of men whose allocations are already man-optimal. That is, $Q_{m}=Q_{m}^{M}$ for $m\in M^{*}$. Initialize $M^{*}$ with all $m\in M$ such that $Q_{m,\slashed{o}}>0$ (by \ref{lem:rural_hosp}). If $M^{*}=M$, stop.

Otherwise, choose $m_{0}\notin M^{*}$, and let $w_{0}$ be $m_{0}$'s worst realized partner in $Q$. Call $(m_{0},w_{0})$ even; we will remove mass from even cells.
\item At an even cell $(m_{t},w_{t})$, consider $w_{t}$'s preference ranking. Consider men $m\notin M^{*}$ such that $(m,w_{t})$ is not man-stable; so $m_{t}\succ_{w_{t}}m$. Choose the one most preferred by $w_{t}$, and call him $m_{t+1}$. (Existence is treated below.) Thus
\[
m_{t}\succ_{w_{t}}m_{t+1}
\]
and
\[
\sum_{w'\succeq_{m_{t+1}}w_{t}}Q_{m_{t+1},w'}<p_{m_{t+1}}
\]
The cell $(m_{t+1},w_{t})$ is an odd cell; mass will be added to it.

Suppose no such $m_{t+1}$ exists. Then \citet{BB02} (cf. pp.~678--679) show that all men encountered so far in the chain are man-optimal; add them to $M^{*}$ and return to Step 1.
\item If $m_{t+1}$ exists, let $w_{t+1}$ be his worst realized partner in $Q$. Since $(m_{t+1},w_{t})$ is not man-stable, $m_{t+1}$ must have positive mass matched with someone strictly worse than $w_{t}$. Hence
\[
w_{t}\succ_{m_{t+1}}w_{t+1},\quad Q_{m_{t+1},w_{t+1}}>0.
\]
Call $(m_{t+1},w_{t+1})$ the next even cell.
\item Continue alternating even and odd cells until a man or woman repeats. Then there is a simple cycle
\[
(m_{1},w_{1}),(m_{2},w_{1}),(m_{2},w_{2}),\ldots(m_{K},w_{K}),(m_{1},w_{K})
\]
where moving mass along $(m_{k},w_{k})\rightarrow(m_{k+1},w_{k})$ improves the man side.

Formally, let
\[
D=\sum_{k=1}^{K}(e_{m_{k+1},w_{k}}-e_{m_{k},w_{k}})
\]
where $e_{m,w}$ is the basis vector in $\mathcal{Q}$ denoting a 1 in the $(m,w)$ entry. Let $\bar{\alpha}=\min_{k\leq K}Q_{m_{k},w_{k}}$, and set
\[
Q^{\alpha}=Q+\alpha D,\quad0<\alpha<\bar{\alpha}.
\]

$\{Q^{\alpha}\}$ is a set of man-side improving matches.
\end{enumerate}
\citet{BB02} also show that the constructed $Q^{\alpha}$ are stable.\footnote{cf. pp.~678--679} Concisely, observe that $\{Q^{\alpha}\}$ all share the same support and thus have the same support graph $G(Q^{\alpha})$. Observe also that $\Gamma$ is a cycle in $G(Q^{\alpha})$. This shows the converse of Lemma \ref{lem:acyclic}.
\begin{lem}
\label{lem:cycle}
Let $|f(p)|>1$. Then there exists $Q^{\alpha}\in f(p)$ with support graph $G(Q^{\alpha})$ containing a cycle
\[
\Gamma=\{(m_{1},w_{1}),(m_{2},w_{1}),(m_{2},w_{2}),\ldots(m_{K},w_{K}),(m_{1},w_{K})\}.
\]
\end{lem}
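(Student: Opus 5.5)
Here is a self-contained route that avoids relying on the details of the man-improving algorithm: convexity plus a flow-decomposition argument.

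First, $f(p)$ is convex. Take $Q^{1},Q^{2}\in f(p)$ with $Q^{1}\neq Q^{2}$ (possible since $|f(p)|>1$), and set $\bar Q=\tfrac12(Q^{1}+Q^{2})$. Feasibility is linear in $Q$, so $\bar Q$ is feasible, and $\supp(\bar Q)=\supp(Q^{1})\cup\supp(Q^{2})$.

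IR of $\bar Q$ is immediate. Any positive entry of $\bar Q$ is positive in $Q^1$ or in $Q^2$, and both of these are IR.

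PS of $\bar Q$ is the step I expect to be the main obstacle. A blocking pair for $\bar Q$ uses witnesses $\bar Q_{m,w'}>0$ and $\bar Q_{m',w}>0$. These two witnesses may come from different matchings: one could be positive only in $Q^{1}$ and the other only in $Q^{2}$. So the support condition does not pass directly to the union.

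\begin{itemize}
\item I would first try the opposing-interests lemma together with rural hospitals. Suppose $Q^{1}_{m,w'}>0$ and $Q^{2}_{m',w}>0$, with $w\succ_m w'$ and $m\succ_w m'$.
\item Stability of $Q^1$ rules out $Q^1_{m',w}>0$. So $w$ fills her mass in $Q^1$ with men she weakly prefers to $m'$. By the rural hospitals lemma her total matched mass is the same in both matchings.
\item Stability of $Q^2$ rules out $Q^2_{m,w'}>0$. So in $Q^{2}$, $m$ is matched only to women he prefers to $w'$ or who are not ranked below $w'$ in the relevant sense.
\item The aim is to derive a contradiction through a lattice/stochastic-dominance comparison between $Q^1$ and $Q^2$.
\end{itemize}

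If this gets messy, I would fall back on citing \citet{BB02}, where convexity of the stable set follows from its description by linear inequalities in the unconstrained model (cf.\ their Theorem on the polytope of stable allocations).

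Next, find the cycle. Let $D=Q^{1}-Q^{2}\neq 0$. By feasibility at the same $p$ and the rural hospitals lemma, every row sum and every column sum of $D$ is zero, including the $\slashed{o}$ entries, which coincide in $Q^1$ and $Q^2$. So $D$, restricted to $M\times W$, is a nonzero circulation on the bipartite graph.

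Pick an entry with $D_{m_1,w_1}>0$. Since the column sum at $w_1$ is zero, some $m_2\neq m_1$ has $D_{m_2,w_1}<0$. Since the row sum at $m_2$ is zero, some $w_2\neq w_1$ has $D_{m_2,w_2}>0$. Continuing to alternate in this way, finiteness eventually forces a repeated vertex. This yields a simple alternating cycle $\Gamma$ of the stated form, every edge of which lies in $\supp(Q^{1})\cup\supp(Q^{2})=\supp(\bar Q)$.

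Taking $Q^{\alpha}:=\bar Q\in f(p)$, the support graph $G(\bar Q)$ contains $\Gamma$, as claimed.

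Alternatively, as in the text, one can take $Q^\alpha=Q+\alpha D$ from the man-improving algorithm started at $Q=Q^{W}\neq Q^{M}$. In that approach the only point needing care is that all cycle edges have positive mass: the odd cells gain mass $\alpha>0$ and the even cells retain $Q_{m_k,w_k}-\alpha>0$. Stability is then inherited from \citet{BB02}.
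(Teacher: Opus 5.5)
Your main route fails at its first step. $f(p)$ is not convex, and the difficulty you flagged in the PS step is a real obstruction, not a technicality.

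Here is a counterexample. Take $M=\{m_1,m_2,m_3\}$, $W=\{w_1,w_2,w_3\}$, $p$ equal to all ones, and everyone acceptable. Preferences are $m_1:w_1\succ w_2\succ w_3$, $m_2:w_2\succ w_3\succ w_1$, $m_3:w_3\succ w_1\succ w_2$ and $w_1:m_2\succ m_3\succ m_1$, $w_2:m_3\succ m_1\succ m_2$, $w_3:m_1\succ m_2\succ m_3$. Let $Q^1$ put unit mass on $(m_1,w_1),(m_2,w_2),(m_3,w_3)$ and $Q^2$ on $(m_1,w_3),(m_2,w_1),(m_3,w_2)$.

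Every man gets his top choice in $Q^1$ and every woman gets hers in $Q^2$, so both are stable. In fact $Q^1=Q^M\geq_M Q^2=Q^W$, so comparability does not help. Yet in $\bar Q=\tfrac12(Q^1+Q^2)$ the pair $(m_1,w_2)$ blocks. The witnesses are $\bar Q_{m_1,w_3}>0$ (from $Q^2$) with $w_2\succ_{m_1}w_3$, and $\bar Q_{m_2,w_2}>0$ (from $Q^1$) with $m_1\succ_{w_2}m_2$.

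Your circulation argument is fine in itself, but here it returns the 6-cycle $(m_1,w_1),(m_2,w_1),(m_2,w_2),(m_3,w_2),(m_3,w_3),(m_1,w_3)$. This cycle contains both $(m_1,w_3)$ and $(m_2,w_2)$, so it lies in the support graph of no stable matching at all. By Lemma~\ref{lem:support}, no reweighting helps.

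The fallback citation does not rescue this. Stability in \citet{BB02} is a disjunctive condition: for each acceptable pair, either $m$ is saturated by partners weakly better than $w$, or $w$ is saturated by partners weakly better than $m$. Their stable set is a lattice under $\geq_M$, not a convex set, as the example shows. The linear-inequality polytope from the one-to-one literature is the convex hull of stable matchings. It contains $\bar Q$, so its points need not be stable in the support sense used here.

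The lemma does not need convexity. It needs the much more specific fact that mass can be shifted along one well-chosen cycle while staying stable. Stable segments exist between stable matchings that differ by a single improving cycle, not between arbitrary pairs. In the example, the segment between $Q^1$ and the middle stable matching on $(m_1,w_2),(m_2,w_3),(m_3,w_1)$ is stable. This fact is what the man-improving step of \citet{BB02} delivers, and it is the route the paper takes.

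Your closing ``alternatively'' paragraph is exactly that argument, and it is sound:
\begin{itemize}
\item Start from a stable $Q$ that is not man-optimal, e.g.\ $Q^W$. You should say why $Q^W\neq Q^M$: if they coincided, opposing interests would give $Q\geq_M Q^M$ and $Q\leq_M Q^M$ for every $Q\in f(p)$, forcing $|f(p)|=1$.
\item Build the cycle.
\item Note that for $0<\alpha<\bar\alpha$ the odd cells carry at least $\alpha$ and the even cells carry $Q_{m_k,w_k}-\alpha>0$, so $\Gamma\subseteq G(Q^\alpha)$.
\item Take stability of $Q^\alpha$ from \citet{BB02}.
\end{itemize}
Drop the convexity route and present this one.
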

\begin{proof}
Omitted.
\end{proof}
The next example continues the previous one and illustrates the construction.
\begin{example}
Continuing Example \ref{exa:2}, consider the set of stable matches $\{Q^{t}:0\leq t\leq1\}$. For $t\in(0,1)$, the support graph consists of every edge $M\times W=\{(m_{1},w_{1}),(m_{2},w_{1}),(m_{2},w_{2}),(m_{1},w_{2})\}$. These edges also form a cycle (in the order listed).
\end{example}
We next show that the component containing $\slashed{o}$ in any support graph is a tree.
\begin{lem}
\label{lem:tree}
Let $Q\in f(p)$. The connected component of the graph $G(Q)$ containing $\slashed{o}$ is a tree.
\end{lem}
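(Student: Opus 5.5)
The plan is to argue by contradiction. Suppose the component $C$ of $G(Q)$ containing $\slashed{o}$ contains a cycle $\Gamma$. Since $G(Q)$ is bipartite apart from $\slashed{o}$, there are two cases: $\Gamma$ passes through $\slashed{o}$, or it does not. The common idea is that a cycle lets us move mass along it while preserving the support, and hence, by Lemma \ref{lem:support}, stability. Rural hospitals (Lemma \ref{lem:rural_hosp}) forbids exactly this kind of movement near $\slashed{o}$.

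\emph{Case 1: $\Gamma$ passes through $\slashed{o}$.} Write $\Gamma$ as a path $\slashed{o}, x_1, x_2, \ldots, x_r, \slashed{o}$ with $x_i \in M \cup W$; consecutive $x_i$ lie on opposite sides, and its length may be odd. Assign alternating signs $-,+,-,\ldots$ to the edges along the path and call the resulting direction $D$. Every interior node $x_i$ has one edge of each sign, so its row or column sum is unchanged. Node $\slashed{o}$ carries no feasibility constraint, so $Q+\alpha D$ is feasible for small $|\alpha|$ and has the same support as $Q$. By Lemma \ref{lem:support} it lies in $f(p)$. However, the entry on the edge $(x_1,\slashed{o})$ (or $(\slashed{o},x_1)$) changes by $-\alpha \neq 0$, which contradicts the first part of Lemma \ref{lem:rural_hosp}.

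\emph{Case 2: $\Gamma$ avoids $\slashed{o}$ but lies in $C$.} Pure linear algebra cannot settle this case. For a ``lasso'' (an even cycle joined to $\slashed{o}$ by a path), the only support-preserving direction is the cycle direction $D_\Gamma$, and it never touches the $\slashed{o}$-edges. So stability has to enter. I would prove a propagation claim.

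\textbf{Propagation claim.} If agent $i$'s row (or column) is the same in every $Q'\in f(p)$, and $Q_{i,j}>0$, then $j$'s column (or row) is also the same in every $Q'\in f(p)$.

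Given the claim, the argument finishes as follows. Take a shortest path $\slashed{o}, x_1, \ldots, x_r = v$ with $v \in \Gamma$. Agent $x_1$ has positive unmatched mass, so its row is invariant by the second part of Lemma \ref{lem:rural_hosp}. Inducting with the claim along the path shows that $v$'s row is invariant. On the other hand, $Q \pm \alpha D_\Gamma \in f(p)$ by Lemma \ref{lem:support}, and this perturbation changes $v$'s entries on the two cycle edges at $v$, a contradiction.

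The main obstacle is proving the propagation claim. My first attempt would use Lemma (Opposing interests): it suffices to compare $Q^M$ and $Q^W$. Suppose $i=m$ has the same row in $Q^M$ and $Q^W$, and that row is positive at $w$. Then $m$ is ``fixed'' at $w$. I would use pairwise stability against $m$'s fixed allocation to show that $w$ cannot shift mass between $m$ and other men across $Q^M$ and $Q^W$, since any such shift would require $m$'s allocation to move. Concretely, I would compare the first-order stochastic dominance inequalities $Q^W_w \geq_w Q^M_w$ and $Q^M_{m'} \geq_{m'} Q^W_{m'}$ for the men $m'$ matched to $w$. I would then use feasibility at $w$ to force all of these inequalities to be equalities, mimicking the proof of rural hospitals in \citet{BB02}.

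If this direct route fails, the fallback is the man-improving algorithm behind Lemma \ref{lem:cycle}. I would show that any cycle in $C$ can be rerouted into a closed alternating walk through $\slashed{o}$ using blocking-pair considerations at $x_1$. That reduces Case 2 to Case 1.
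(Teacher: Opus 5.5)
Your Case 1 is sound. The alternating perturbation preserves every constraint at the $x_i$ and keeps the support. It also moves an unmatched entry, which contradicts Lemma~\ref{lem:rural_hosp}.

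The gap is in Case 2: the propagation claim is false as stated, so no FOSD/feasibility argument can establish it. Here is a counterexample.
\begin{itemize}
\item Take the market of Example~\ref{exa:2} and add a man $m_0$ with $p_{m_0}=1/2$ and $w_1\succ_{m_0}\slashed{o}\succ_{m_0}w_2$.
\item Let $w_1$ rank him first ($m_0\succ_{w_1}m_2\succ_{w_1}m_1$), and raise $p_{w_1}$ to $3/2$.
\item In every stable matching $Q_{m_0,w_1}=1/2$. Otherwise $Q_{m_0,\slashed{o}}>0$ while $w_1$ holds positive mass with someone she ranks below $m_0$, so $(m_0,w_1)$ blocks. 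Thus $m_0$'s row is invariant.
\item Yet for each $t\in[0,1]$, setting $Q_{m_0,w_1}=1/2$ and filling the $\{m_1,m_2\}\times\{w_1,w_2\}$ block with $Q^t$ gives a stable matching. So $w_1$'s column, $(1/2,t,1-t)$ over $(m_0,m_1,m_2)$, varies even though $Q_{m_0,w_1}>0$.
\end{itemize}
Invariance therefore does not cross edges in general; here $m_0$ sits in a cyclic component away from $\slashed{o}$. A version valid only along paths issuing from $\slashed{o}$ would have to exploit where the path starts, and you give no argument for that. Your fallback of ``rerouting'' $\Gamma$ through $\slashed{o}$ is too vague to assess.

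What is missing is a direct use of pairwise stability at the level of the support. Let $c_i$ be $i$'s worst realized partner in $Q$; it is defined because $p_i>0$.
\begin{itemize}
\item If $Q_{m,w}>0$ with $w\neq c_m$ and $m\neq c_w$, then $(m,w)$ blocks $Q$ via $w'=c_m$ and $m'=c_w$.
\item By IR, any $i$ with positive unmatched mass has $c_i=\slashed{o}$.
\end{itemize}
Hence every edge of $G(Q)$ has the form $\{i,c_i\}$ for some $i\in M\cup W$, and $\slashed{o}$ itself accounts for no edge. A component with vertex set $V\ni\slashed{o}$ therefore has at most $|V|-1$ edges, and being connected it is a tree. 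This is the paper's argument. It handles both of your cases at once and needs neither Lemma~\ref{lem:rural_hosp} nor Lemma~\ref{lem:support}.
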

\begin{proof}
Let $c_{m}\in W\cup\slashed{o}$ denote $m$'s worst realized match, and analogously for $c_{w}$. Since $Q$ is stable, every edge $(m,w)\in\supp(Q)$ has either $w=c_{m}$ or $m=c_{w}$. That is, edges are of the form $(m,c_{m})$ or $(c_{w},w)$.

Each vertex $i\in M\cup W$ contributes the edge $(i,c_{i})$ (re-ordering for $M$ or $W$ belonging), since $c$ is binding. The vertex $\slashed{o}$ contributes no additional edges beyond these. A connected component with nodes $V\ni\slashed{o}$ thus contains at most $|V|-1$ distinct edges. A connected component on $|V|$ vertices with $|V|-1$ or fewer edges is a tree.
\end{proof}

We now turn to the proof of the main result; the sketch is as follows. We start with $Q^{\alpha}$. The cycle $\Gamma$ must be part of a connected component of $G(Q^{\alpha})$ that does not contain $\slashed{o}$. Thus this component has equal mass of both sides. If we introduce infinitesimal mass to the $M$ side, the resulting stable matches cannot lie in $\supp(Q^{\alpha})$. Thus we find a discontinuous change in $f(p)$.
\begin{proof}[Proof of Theorem \ref{thm:lhc}]
Let $f(p)$ be multivalued. Then fix $Q^{\alpha}$ from Lemma \ref{lem:cycle} and its support graph $G(Q)$, which contains a cycle $\Gamma=\{(m_{k},w_{k}),(m_{k+1},w_{k})\}_{k=1,\ldots,K}$. Note that each pair $(m,w)\in\Gamma$ has $Q_{m,w}^{\alpha}>0$.

Let $p^{n}=p+\frac{1}{n}e_{m_{1}}$ (that is, adding shrinking mass to the first man in the cycle), so $p^{n}\downarrow p$. Toward a contradiction, suppose that $f$ is lhc at $p$, so there exists a subsequence of $p^{n}$ (abusing notation, denoted $n$) and sequence $Q^{n}\in f(p^{n})$ with $Q^{n}\rightarrow Q^{\alpha}$. Let $c^{n}$ be the profile of worst realized matches in $Q^{n}$.

For large enough $n$, we have $Q_{m,w}^{n}>0$ whenever $(m,w)\in\supp(Q^{\alpha})$, in particular for $(m,w)\in\Gamma$. Thus each of these edges is in $G(Q^{n})$ for large enough $n$; recall that they form a cycle. Denote the component containing this cycle as $M^{n}\cup W^{n}$; since it contains a cycle, it does not include $\slashed{o}$ by Lemma \ref{lem:tree}. Thus
\[
Q_{M^{n},W^{n}}^{n}=\sum_{m\in M^{n}}p_{m}^{n}=\sum_{w\in W^{n}}p_{w}^{n}
\]
as this component is fully matched.

Since $m_{1}\in M^{n}$, we have
\[
\left(\sum_{m\in M^{n}}p_{m}\right)+\frac{1}{n}=\sum_{w\in W^{n}}p_{w}
\]
\[
\Longrightarrow\frac{1}{n}=\sum_{w\in W^{n}}p_{w}-\sum_{m\in M^{n}}p_{m}.
\]
There are finitely many possible pairs $(M^{n},W^{n})$, so $\sum_{w\in W^{n}}p_{w}-\sum_{m\in M^{n}}p_{m}$ has finitely many possible values. (Note $p_{w},p_{m}$ are not indexed by $n$.) But this contradicts that $\frac{1}{n}\rightarrow0$.
\end{proof}

\subsection{Uniqueness}
\label{subsec:Uniqueness}

\citet{BB02} show the following sufficient condition for $|f(p)|=1$, which they term \textbf{strong nondegeneracy}.
\begin{lem}[Strong nondegeneracy. \citealp{BB02}, Theorem 9]
Fix $M$, $W$, and $p$. Suppose that for all nonempty subsets $A\subseteq M$ and $B\subseteq W$,
\[
p_{A}\neq p_{B}.
\]
Then $f(p)$ is single-valued.
\end{lem}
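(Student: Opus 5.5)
We prove the contrapositive: if $|f(p)|>1$, then some nonempty $A\subseteq M$ and $B\subseteq W$ satisfy $p_A=p_B$. The plan reuses the machinery from the proof of Theorem \ref{thm:lhc}.

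First, by Lemma \ref{lem:cycle}, multi-valuedness yields some $Q^{\alpha}\in f(p)$ whose support graph $G(Q^{\alpha})$ contains a cycle $\Gamma$. Let $C$ be the connected component of $G(Q^{\alpha})$ containing $\Gamma$. By Lemma \ref{lem:tree}, the component containing $\slashed{o}$ is a tree, so $C$ does not contain $\slashed{o}$. Write $A=C\cap M$ and $B=C\cap W$. Both sets are nonempty, since $\Gamma$ has at least one man and one woman.

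Next, we use that $C$ is a full connected component avoiding $\slashed{o}$. Every $m\in A$ has all of its mass $p_m$ on edges $(m,w)$ with $w\in W$ (there is no edge to $\slashed{o}$), and every such $w$ lies in $C$, hence in $B$. Feasibility then gives $p_m=\sum_{w\in B}Q^{\alpha}_{m,w}$ for each $m\in A$. Symmetrically, $p_w=\sum_{m\in A}Q^{\alpha}_{m,w}$ for each $w\in B$. Summing both identities gives
\[
p_A=\sum_{m\in A}\sum_{w\in B}Q^{\alpha}_{m,w}=p_B,
\]
which contradicts strong nondegeneracy.

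The only delicate point is the justification that $C$ excludes $\slashed{o}$, which rests on the counting argument of Lemma \ref{lem:tree} together with the existence of a cycle from Lemma \ref{lem:cycle}. Both results are already available, so once they are invoked the rest is bookkeeping of row and column sums.
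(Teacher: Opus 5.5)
Your argument is correct, but it is not how the paper handles this lemma. The paper gives no proof at all and simply imports the result as Theorem 9 of \citet{BB02}. You instead derive it internally: you combine Lemma~\ref{lem:cycle} (multiplicity yields a stable match with a cyclic support graph) with Lemma~\ref{lem:tree} (the component of $\slashed{o}$ is a tree), and then do row/column-sum accounting on the cyclic component. This is the ``fully matched component'' step from the proof of Theorem~\ref{thm:lhc}, read in contrapositive form. The steps check out: $\slashed{o}\notin C$ forces $Q^{\alpha}_{m,\slashed{o}}=Q^{\alpha}_{\slashed{o},w}=0$ on $C$, and every neighbor of a vertex in $C$ lies in $C$.

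Your route buys a sharper and more unified statement. Multiplicity produces not just some coincidence $p_A=p_B$, but one in which $A\cup B$ is the vertex set of a closed, fully matched component of the support of a stable match. That makes it transparent why the perturbation $p+\frac{1}{n}e_{m_1}$ in the proof of Theorem~\ref{thm:lhc} destroys that match. What it does not buy is independence from \citet{BB02}. Lemma~\ref{lem:cycle} is stated with its proof omitted and rests on their man-improving algorithm, so the dependence on their work is relocated rather than removed.

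A minor point: you announce a contrapositive but close with ``contradicts strong nondegeneracy.'' Either framing works, but pick one.
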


Thus, in Euclidean space, uniqueness is the generic case. Formally, the set of strongly nondegenerate vectors is open and dense in $\mathcal{P}$, and its complement has Lebesgue measure zero.
\begin{prop}
Define
\[
\mathcal{P}^{\mathrm{SND}}:=\left\{p\in\mathcal{P}:p_{A}\neq p_{B}\text{ for all nonempty }A\subseteq M,\ B\subseteq W\right\}.
\]
Then $\mathcal{P}^{\mathrm{SND}}$ is open and dense in $\mathcal{P}$, and $\mathcal{P}\setminus\mathcal{P}^{\mathrm{SND}}$ has Lebesgue measure zero.
\end{prop}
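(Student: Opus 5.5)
The plan is to write the complement $\mathcal{P}\setminus\mathcal{P}^{\mathrm{SND}}$ as a finite union of hyperplanes intersected with $\mathcal{P}$, and then read off all three properties from that description. For each pair of nonempty sets $A\subseteq M$, $B\subseteq W$, define the linear functional
\[
L_{A,B}(p):=\sum_{m\in A}p_{m}-\sum_{w\in B}p_{w},
\]
and let $H_{A,B}:=\{p\in\mathbb{R}^{M\cup W}:L_{A,B}(p)=0\}$. By definition,
\[
\mathcal{P}\setminus\mathcal{P}^{\mathrm{SND}}=\mathcal{P}\cap\bigcup_{A,B}H_{A,B},
\]
where the union runs over the finitely many pairs (at most $(2^{|M|}-1)(2^{|W|}-1)$ of them).

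The key observation is that each $L_{A,B}$ is a \emph{nonzero} linear functional. Indeed, $A$ is nonempty, so the coefficient of $p_{m}$ is $1$ for some $m\in A$, and $M$ and $W$ are disjoint index sets, so this coefficient is not cancelled. Hence each $H_{A,B}$ is a genuine hyperplane, that is, a closed linear subspace of dimension $|M|+|W|-1$.

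The three claims then follow in turn.
\begin{itemize}
\item \textbf{Openness.} A finite union of closed sets is closed, so $\bigcup_{A,B}H_{A,B}$ is closed. Its complement intersected with $\mathcal{P}$, namely $\mathcal{P}^{\mathrm{SND}}$, is therefore relatively open in $\mathcal{P}$. Since $\mathcal{P}$ is itself open in $\mathbb{R}^{M\cup W}$, $\mathcal{P}^{\mathrm{SND}}$ is open.
\item \textbf{Measure zero.} A proper affine subspace has Lebesgue measure zero. One can see this, for example, by Fubini after a linear change of coordinates, or by noting that a hyperplane is the graph of a linear function of the remaining coordinates. A finite union of null sets is null.
\item \textbf{Density.} A null set has empty interior, because every nonempty open subset of $\mathbb{R}^{M\cup W}$ has positive measure. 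So every nonempty open $U\subseteq\mathcal{P}$ contains a point outside $\bigcup H_{A,B}$, and that point lies in $\mathcal{P}^{\mathrm{SND}}$.
\end{itemize}
If a more hands-on argument for density is preferred, fix $p\in\mathcal{P}$ and consider the perturbations $p+\varepsilon v$ for a fixed direction $v$ such that $L_{A,B}(v)\neq0$ for all $A,B$. One choice is $v_{m}=1$ and $v_{w}=0$ for all types. Then $L_{A,B}(v)=|A|\ge1$, so the map $\varepsilon\mapsto L_{A,B}(p+\varepsilon v)$ has at most one zero. Excluding these finitely many values of $\varepsilon$ gives points of $\mathcal{P}^{\mathrm{SND}}$ arbitrarily close to $p$, and $p+\varepsilon v$ stays in $\mathbb{R}_{++}$ for small $\varepsilon$.

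There is no serious obstacle here. The only point needing care is the one checked above: each $L_{A,B}$ must be a nonzero functional, which is exactly where the requirement that $A$ (and $B$) be nonempty enters. Everything else reduces to standard facts about finite unions of hyperplanes.
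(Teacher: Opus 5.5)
Your proposal is correct and follows essentially the same route as the paper: write the complement as the finite union of the hyperplane slices $H_{A,B}\cap\mathcal{P}$, each relatively closed and Lebesgue-null, and read off openness, density, and measure zero. The differences are cosmetic: you explicitly check that each $L_{A,B}$ is a nonzero functional, which the paper assumes when it calls $H_{A,B}$ a proper hyperplane, and you derive density from measure zero rather than from the finite union being nowhere dense.
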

\begin{proof}
For any nonempty $A\subseteq M$ and $B\subseteq W$, define
\[
H_{A,B}:=\left\{p\in\mathcal{P}:p_{A}-p_{B}=0\right\}.
\]
The set $H_{A,B}$ is the intersection of $\mathcal{P}$ with a proper hyperplane, so it is closed relative to $\mathcal{P}$ and has Lebesgue measure zero. Observe that
\[
\mathcal{P}\setminus\mathcal{P}^{\mathrm{SND}}=\bigcup_{A,B}H_{A,B},
\]
where the union is finite because there are only finitely many possible pairs $(A,B)$. Hence $\mathcal{P}\setminus\mathcal{P}^{\mathrm{SND}}$ has Lebesgue measure zero. Moreover, as a finite union of relatively closed sets with empty interior, it is relatively closed and nowhere dense in $\mathcal{P}$. Therefore its complement $\mathcal{P}^{\mathrm{SND}}$ is open and dense in $\mathcal{P}$.
\end{proof}

\section{Approximation of discrete markets by continuums}

Continuum matching models are often used as approximations to large finite markets, such as in \citet{AL16} and \citet{CKK19}. In our setting, the connection is straightforward. A discrete market with many agents of finitely many types can be represented by a vector $k$ of type counts. After rescaling individuals to have vanishing mass, this becomes a population vector $p$, and the approximation is governed directly by the properties of $f(p)$.

Two conclusions arise from our continuity results. Upper hemicontinuity gives \textbf{consistency}: any limit of stable outcomes of discrete markets is stable in the limiting continuum market. Uniqueness (and thus lower hemicontinuity) gives \textbf{robustness}: if $f(p)=\{Q\}$, then every stable outcome of every sequence of discrete markets converges to $Q$. Conversely, if $f(p)$ is multi-valued, then there is some $Q\in f(p)$ and some sequence of discrete markets for which no stable outcomes approach $Q$.

Thus our results highlight two issues when choosing approximations via continuums. First is the selection of the population vector $p$---since $f(p)$ is not generally continuous, small changes in $p$ can result in discontinuous changes in the set of stable matches. Second is the exact nature of the approximation. Selection of the actual sequence of discrete markets can change the available limiting points. In words, the manner in which the market grows and agents shrink affects the available limiting points. When $p$ is such that there is a unique stable matching, both issues are resolved.

We introduce notation to formalize the above notions. Fix agent types and preferences $(M,W,\succeq)$. Let $k^{n}\in\mathbb{Z}_{++}^{M\cup W}$ be a sequence of discrete counts, where $k_{i}$ is the count of agent type $i\in M\cup W$. Let $X^{n}$ be an aggregate count matrix, where $X_{m,w}^{n}$ represents the count of type $m$ matched to $w$. Stability is defined analogously to the notion for continuum models. Formally, $X^{n}$ is stable if\footnote{To see this is nonempty, observe that $f(k^{n})$ is nonempty. Let $Q\in f(k^{n})$; then any $X^{n}$ such that $\supp(X^{n})\subseteq\supp(Q)$ is also stable. The only other requirement is for $X^{n}$ to have integral entries, which is always possible, perhaps shrinking the support.}
\[
X^{n}\in f(k^{n})\cap\mathbb{Z}_{+}^{(M\cup\slashed{o})\times(W\cup\slashed{o})}.
\]
To represent agents of shrinking mass, let $a_{n}\rightarrow\infty$ and define
\[
p^{n}:=\frac{k^{n}}{a^{n}},\quad Q^{n}=\frac{X^{n}}{a^{n}}
\]
By Lemma \ref{lem:homog}, $X^{n}\in f(k^{n})$ implies that $Q^{n}\in f(p^{n})$. Letting $k^{n},a^{n}\rightarrow\infty$ such that $p^{n}$ converges thus creates a sequence where discrete agents shrink in mass toward a continuum model.

The first result, consistency, guarantees that the limit of any sequence of stable matches $X^{n}$ is itself stable.
\begin{prop}[Consistency]
Fix $(M,W,\succeq,p)$. Let $a^{n}\rightarrow\infty$ and $k^{n}/a^{n}\rightarrow p$. Let $X^{n}$ be a convergent sequence, where each $X^{n}$ is a stable aggregate count matrix of the $n$th market. Then $X^{n}/a^{n}:=Q^{n}\rightarrow Q\in f(p)$.
\end{prop}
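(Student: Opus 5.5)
The plan is to reduce the statement to Theorem~\ref{thm:uhc} (upper hemicontinuity) combined with Lemma~\ref{lem:homog} (homogeneity). This requires two facts. First, each rescaled matrix $Q^{n}=X^{n}/a^{n}$ is a stable matching at the rescaled population $p^{n}=k^{n}/a^{n}$. Second, the sequence $Q^{n}$ converges. Once both are in hand, the sequential definition of uhc applies directly.

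\textbf{Step 1: membership.} Since $X^{n}$ is a stable aggregate count matrix of the $n$th market, $X^{n}\in f(k^{n})$ by definition. Applying Lemma~\ref{lem:homog} with $\lambda=1/a^{n}>0$ gives
\[
f(k^{n}/a^{n})=\frac{1}{a^{n}}f(k^{n}),
\]
so $Q^{n}=X^{n}/a^{n}\in f(p^{n})$. Note also that $p^{n}\in\mathcal{P}$, because $k^{n}\in\mathbb{Z}_{++}^{M\cup W}$ and $a^{n}>0$. Moreover $p^{n}\rightarrow p\in\mathcal{P}$ by hypothesis.

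\textbf{Step 2: convergence of $Q^{n}$.} This is the step I expect to be the main obstacle, because of how the hypothesis is phrased. Taken literally, ``$X^{n}$ convergent'' with $a^{n}\rightarrow\infty$ would force $Q^{n}\rightarrow0$. That limit is infeasible at $p$ whenever $p\neq0$, since the row and column sums of $Q^{n}$ equal $p^{n}$. I would therefore read the hypothesis as convergence of the rescaled sequence $Q^{n}=X^{n}/a^{n}$, which is evidently what the conclusion intends.

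To guarantee a limit exists at all, I would use boundedness. Feasibility gives $0\le Q^{n}_{m,w}\le\max_{i}p^{n}_{i}$, and $p^{n}$ is bounded because it converges. So a convergent subsequence always exists, and any limit point of $Q^{n}$ is a candidate $Q$.

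\textbf{Step 3: conclusion.} Given $p^{n}\rightarrow p$, $Q^{n}\in f(p^{n})$ and $Q^{n}\rightarrow Q$, the uhc property of $f$ at $p$ from Theorem~\ref{thm:uhc} yields $Q\in f(p)$. If one works only with subsequential limits, the same argument shows that every limit point of $Q^{n}$ lies in $f(p)$.

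For completeness I would remark that the three ingredients of Theorem~\ref{thm:uhc}'s proof carry over directly:
\begin{itemize}
\item feasibility passes to the limit, since the row and column sums equal $p^{n}\rightarrow p$;
\item IR violations persist for large $n$, since any positive entry of $Q$ is positive in $Q^{n}$ for large $n$;
\item blocking pairs persist for large $n$, by the same positivity argument.
\end{itemize}
Hence no integrality-specific argument is needed.
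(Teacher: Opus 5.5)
Your proposal is correct and follows the paper's argument: Lemma~\ref{lem:homog} gives $Q^{n}=X^{n}/a^{n}\in f(p^{n})$, and upper hemicontinuity (Theorem~\ref{thm:uhc}) then places the limit in $f(p)$. Your reading of the hypothesis as convergence of the rescaled sequence $Q^{n}$, not of $X^{n}$, matches how the paper's proof uses it; read literally, the hypothesis on $X^{n}$ could never hold with $p\in\mathcal{P}$.
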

\begin{proof}
This is a consequence of uhc of $f(\cdot)$, by Theorem \ref{thm:uhc}. We know each $Q^{n}\in f(p^{n})$. Since $Q^{n}\rightarrow Q$ is convergent and $f(\cdot)$ is uhc, $Q\in f(p)$.
\end{proof}
The next result is a similar consequence of uhc; if $f(p)$ is a singleton, any sequence $X^{n}$ will converge.
\begin{prop}[Robustness I]
Fix $(M,W,\succeq,p)$, and suppose $f(p)=\{Q\}$. Let $a^{n}\rightarrow\infty$ and $k^{n}/a^{n}\rightarrow p$, and let $X^{n}$ be any stable aggregate count matrix of the $n$th market. Then
\[
\frac{X^{n}}{a^{n}}\rightarrow Q.
\]
\end{prop}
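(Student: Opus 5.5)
The plan is to reduce Robustness I to the continuity of $f$ at $p$ given by Theorem~\ref{thm:uhc}, through the rescaling identity of Lemma~\ref{lem:homog}. Set $p^{n}=k^{n}/a^{n}$ and $Q^{n}=X^{n}/a^{n}$.

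\emph{Step 1 (membership).} Since $X^{n}\in f(k^{n})$, Lemma~\ref{lem:homog} with $\lambda=1/a^{n}$ gives $Q^{n}\in f(p^{n})$. Since $p^{n}\to p$, we have a sequence of stable matchings at populations converging to $p$. Note that, unlike the Consistency proposition, we do not assume $X^{n}$ converges, so convergence of $Q^{n}$ is exactly what must be proved.

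\emph{Step 2 (boundedness).} Feasibility at $p^{n}$ bounds every entry: $Q^{n}_{m,w}\le p^{n}_{m}$ and $Q^{n}_{m,w}\le p^{n}_{w}$. Because $p^{n}$ converges, it is bounded, say by $\sup_{n}\|p^{n}\|_{\infty}<\infty$. Hence $\{Q^{n}\}$ lies in a compact subset of $\mathcal{Q}$.

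\emph{Step 3 (subsequence argument).} Suppose $Q^{n}\not\to Q$. Then there are an $\varepsilon>0$ and a subsequence with $\|Q^{n_{j}}-Q\|\ge\varepsilon$. By Step 2, this subsequence has a further convergent subsequence with some limit $\tilde Q$. Along it, $p^{n_{j}}\to p$ and $Q^{n_{j}}\in f(p^{n_{j}})$, so upper hemicontinuity (Theorem~\ref{thm:uhc}) gives $\tilde Q\in f(p)=\{Q\}$. Thus $\tilde Q=Q$, contradicting $\|\tilde Q-Q\|\ge\varepsilon$. Therefore $X^{n}/a^{n}\to Q$. Equivalently, this step is just the continuity clause of Theorem~\ref{thm:uhc} applied to the sequence $p^{n}$, and one could cite that argument directly.

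There is no genuine obstacle here; this is the same reasoning as the second half of Theorem~\ref{thm:uhc}. The only points needing care are in Step 1. First, Lemma~\ref{lem:homog} should be applied to $f$ extended to integer population vectors $k^{n}\in\mathbb{Z}_{++}^{M\cup W}\subset\mathcal{P}$, which is legitimate since these are positive reals. Second, the integrality of $X^{n}$ plays no role.
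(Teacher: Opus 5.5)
Your proposal is correct and is essentially the paper's proof: use Lemma~\ref{lem:homog} to get $X^{n}/a^{n}\in f(p^{n})$, note the sequence is bounded, and apply upper hemicontinuity so that every subsequential limit lies in $f(p)=\{Q\}$. Your explicit subsequence-of-a-subsequence contradiction simply spells out the step the paper states in one line.
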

\begin{proof}
Since $Q^{n}:=X^{n}/a^{n}\in f(p^{n})$, the sequence is bounded. Since $f(\cdot)$ is uhc, every convergent subsequence from $f(p^{n})$ has a limit in $f(p)=\{Q\}$. Thus $Q^{n}\rightarrow Q$ as desired.
\end{proof}
The next result is the most substantial. Fix a sequence $k^{n}$. If $f(\cdot)$ is not lhc at $p$, then there will exist $Q\in f(p)$ such that no sequence $X^{n}$ will converge to it. Informally, there are continuum stable matches that are not the approximation of any discrete stable matches.
\begin{prop}[Robustness II]
Fix $(M,W,\succeq,p)$, and suppose $|f(p)|>1$. Then there exist $Q\in f(p)$, $\varepsilon>0$, and sequences $k^{n},a^{n}$ with $k^{n}/a^{n}\rightarrow p$, such that for every stable aggregate count matrix $X^{n}$ of the $n$th market,
\[
\left\Vert \frac{X^{n}}{a^{n}}-Q\right\Vert \geq\varepsilon.
\]
\end{prop}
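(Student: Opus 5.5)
The plan is to discretize the proof of Theorem~\ref{thm:lhc}. Because $|f(p)|>1$, Lemma~\ref{lem:cycle} gives $Q:=Q^{\alpha}\in f(p)$ whose support graph contains a cycle $\Gamma$ through $m_{1}$. Let $\delta:=\min_{(m,w)\in\supp(Q)}Q_{m,w}>0$, and take $\varepsilon:=\delta/2$. Any matrix within $\varepsilon$ of $Q$ (in sup norm, or any norm after adjusting constants) is then strictly positive on every cell of $\supp(Q)$, and in particular on every edge of $\Gamma$.

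The discrete markets are built so that $a^{n}p$ is slightly perturbed, in the spirit of $p+\frac{1}{n}e_{m_{1}}$. Set $a^{n}=n$ and choose integer counts $k^{n}$ with $k^{n}/n\to p$. The key property we need is that for every pair of nonempty $A\subseteq M$ and $B\subseteq W$ with $p_{A}=p_{B}$, we have $k^{n}_{A}\neq k^{n}_{B}$ whenever $m_{1}\in A$. A candidate is $k^{n}_{i}=\lfloor np_{i}\rfloor$ for $i\neq m_{1}$ and $k^{n}_{m_{1}}=\lfloor np_{m_{1}}\rfloor+L$, with $L$ a fixed integer larger than $|M|+|W|$. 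Then for $A\ni m_{1}$,
\[
k^{n}_{A}-k^{n}_{B}=n(p_{A}-p_{B})+L-(\text{fractional-part errors}).
\]
The error term has absolute value below $|M|+|W|$, so when $p_{A}=p_{B}$ the difference is strictly positive. When $p_{A}\neq p_{B}$ nothing is needed for this step. Also $k^{n}/n\to p$, since $L/n\to0$.

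Now suppose that for some $n$ a stable $X^{n}$ satisfies $\|X^{n}/n-Q\|<\varepsilon$. Then $Q^{n}=X^{n}/n\in f(k^{n}/n)$ by Lemma~\ref{lem:homog}, and $\supp(Q^{n})\supseteq\supp(Q)$, so $G(Q^{n})$ contains $\Gamma$. By Lemma~\ref{lem:tree}, the component $M^{n}\cup W^{n}$ containing $\Gamma$ does not contain $\slashed{o}$, hence is fully matched and satisfies $k^{n}_{M^{n}}=k^{n}_{W^{n}}$, with $m_{1}\in M^{n}$.

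The main obstacle is ruling out this equality when $p_{M^{n}}\neq p_{W^{n}}$, since the construction above only handles pairs with $p_{A}=p_{B}$. The plan here is a closeness argument. The component of $G(Q^{n})$ containing $\Gamma$ contains the component $C$ of $G(Q)$ containing $\Gamma$, because $\supp(Q^{n})\supseteq\supp(Q)$. I would try to show that, for $\varepsilon$ small, this component is exactly $C$. Any extra edge $e$ of $G(Q^{n})$ joining $C$ to the outside carries mass below $\varepsilon$. But the closed component $C$ of $Q$ is fully matched at $p$: its row and column sums are exhausted inside $C$. So the mass leaving $C$ in $Q^{n}$ is governed by $k^{n}/n-p$ and by $\varepsilon$, which alone does not forbid small leaks.

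To close this gap, I would instead bound the quantity directly: $|p_{M^{n}}-p_{W^{n}}|\le|k^{n}_{M^{n}}/n-p_{M^{n}}|+|k^{n}_{W^{n}}/n-p_{W^{n}}|\to0$. Since there are finitely many pairs $(A,B)$, the nonzero values of $|p_{A}-p_{B}|$ are bounded below by some $\eta>0$. For $n$ large the case $p_{M^{n}}\neq p_{W^{n}}$ is therefore impossible, leaving only $p_{M^{n}}=p_{W^{n}}$, which the choice of $L$ excludes. Hence for all large $n$, every stable $X^{n}$ satisfies $\|X^{n}/a^{n}-Q\|\ge\varepsilon$. The finitely many early $n$ can be discarded by re-indexing the sequence, which completes the proof.
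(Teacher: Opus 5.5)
Your argument is correct, but it takes a different route from the paper.

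\textbf{The paper's route.} The paper treats Robustness II as a soft consequence of the continuity results. From the failure of lhc (Theorem~\ref{thm:lhc}) it extracts $Q$, $\varepsilon$ and a sequence $x^n\to p$ with $f(x^n)\cap B_{2\varepsilon}(Q)=\emptyset$. It then uses uhc and compact values to fatten each $x^n$ into a neighborhood $U^n$ on which $f$ avoids $B_{\varepsilon}(Q)$. Finally it picks rational $p^n=b^n/c^n\in U^n$ and sets $k^n=nb^n$, $a^n=nc^n$.

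\textbf{Your route.} You do not use Theorem~\ref{thm:lhc} as a black box. Instead you rerun its combinatorial core directly on integer populations:
\begin{itemize}
\item the cycle from Lemma~\ref{lem:cycle};
\item the tree property of the $\slashed{o}$-component (Lemma~\ref{lem:tree}), which forces $k^n_{M^n}=k^n_{W^n}$;
\item a fixed integer surplus $L$ on $m_1$, which breaks every balanced pair containing $m_1$ despite rounding;
\item the gap $\eta$ below the finitely many nonzero values of $|p_A-p_B|$, which uniformly rules out unbalanced pairs for large $n$.
\end{itemize}
The last step fully resolves what you called the main obstacle. Your exploratory detour about the component being exactly $C$ can simply be deleted.

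\textbf{What each approach buys.} The paper's proof is modular. It uses only uhc, compact values, homogeneity and density of the rationals, so it would transfer verbatim to any correspondence with those properties. But it is non-constructive: the denominators $c^n$ are uncontrolled, and $Q,\varepsilon$ come only from an existence statement.

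Your proof never invokes uhc, and it produces explicit data:
\begin{itemize}
\item $Q=Q^{\alpha}$;
\item $\varepsilon$ equal to half its smallest positive entry (in sup norm);
\item $a^n=n$;
\item $k^n$ the rounding of $np$ plus $L$ extra copies of $m_1$.
\end{itemize}
It therefore shows the failure already occurs along the most natural replica-type sequence of markets. It actually establishes $f(k^n/n)\cap B_{\varepsilon}(Q)=\emptyset$ for all large $n$, which is a uniform, quantitative version of the failure of lhc that does not even need integrality of $X^n$. The cost is that your argument depends on the specific structural lemmas rather than on the abstract continuity properties. Both routes ultimately rest on Lemma~\ref{lem:cycle}.
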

\begin{proof}
Toward a contradiction, suppose $f(p)$ is multivalued. By Theorem \ref{thm:lhc}, $f(\cdot)$ is not lhc at $p$. Thus there exists some $Q\in f(p)$ and $\varepsilon>0$, with a sequence $x^{n}\rightarrow p$ such that
\[
f(x^{n})\cap B_{2\varepsilon}(Q)=\emptyset
\]
for infinitely many $n$.

Since $f(\cdot)$ is uhc and compact-valued, for each $n$ there exists a neighborhood $U^{n}$ of $x^{n}$ such that
\[
f(y)\cap B_{\varepsilon}(Q)=\emptyset\quad\forall y\in U^{n}.
\]
Suppose not. Then there exists a sequence $y^{\ell}\rightarrow x^{n}$ and
\[
Y^{\ell}\in f(y^{\ell})\cap B_{\varepsilon}(Q)
\]
The sequence $Y^{\ell}$ is bounded, so there exists a convergent subsequence, denoted $Y^{m}\rightarrow\bar{Y}$. Uhc gives $\bar{Y}\in f(x^{n})$ and
\[
\left\Vert \bar{Y}-Q\right\Vert \leq\varepsilon,
\]
a contradiction.

We turn to constructing the sequences $k^{n}$ and $p^{n}$. Since rational vectors are dense in $\mathcal{P}$, let
\[
p^{n}\in U^{n}\cap\mathbb{Q}_{++}^{M\cup W}
\]
such that $\left\Vert p^{n}-x^{n}\right\Vert <1/n$. Then $p^{n}\rightarrow p$. Write $p^{n}=b^{n}/c^{n}$ for some $b^{n}\in\mathbb{Z}_{++}^{M\cup W}$ and $c^{n}\in\mathbb{Z}_{++}$. Define
\[
a_{n}:=nc_{n},\quad k^{n}=nb^{n},
\]
so $a_{n}\rightarrow\infty$ and $k^{n}$ is integral, with $k^{n}/a^{n}=p^{n}\rightarrow p$.

Let $X^{n}$ be any stable aggregate count matrix of the discrete market $(M,W,\succeq,k^{n})$. Then $Q^{n}:=X^{n}/a^{n}\in f(p^{n})$ by construction. But $p^{n}\in U^{n}$, so
\[
f(p^{n})\cap B_{\varepsilon}(Q)=\emptyset
\]
Then we have
\[
\left\Vert Q^{n}-Q\right\Vert \geq\varepsilon
\]
as desired; note that $X^{n}$ was arbitrary.
\end{proof}
Intuitively, uhc guarantees that limits of the discrete market stable matches are stable. However, there may be continuum stable aggregate matches that are not approximable by discrete markets. The following example also notes that side-optimal selection may not remove the problem.
\begin{example}
Continue from Example \ref{exa:2}. Recall that for $p=(1,1,1,1)$, the stable matches are given by \[
Q^{t} =
\begin{bNiceMatrix}[first-row,first-col]
            & w_1 & w_2 & \varnothing \\
m_1         & t   & 1-t   & 0        \\
m_2         & 1-t  & t   & 0          \\
\varnothing & 0   & 0   & 0
\end{bNiceMatrix}
\]
for $t\in[0,1]$, with man-optimal stable match \[
Q^{M} =
\begin{bNiceMatrix}[first-row,first-col]
            & w_1 & w_2 & \varnothing \\
m_1         & 1   & 0   & 0        \\
m_2         & 0  & 1   & 0          \\
\varnothing & 0   & 0   & 0
\end{bNiceMatrix}.
\]
Now consider $p^{n}=(1+1/n,1,1,1)$. (It is straightforward to write discrete markets and $a^{n}$ corresponding to $p^{n}$.) As in Example \ref{exa:2}, the unique stable matching is \[
\begin{bNiceMatrix}[first-row,first-col]
            & w_1 & w_2 & \varnothing \\
m_1         & 0   & 1   & 1/n        \\
m_2         & 1   & 0   & 0          \\
\varnothing & 0   & 0   & 0
\end{bNiceMatrix}.
\]
Then $Q^{n}\rightarrow Q^{W}$ only, and the limiting market has many stable matches that are not approximable by discrete markets of this form.
\end{example}

\section{Conclusion}

We study a continuum stable marriage market. Building on the classic \citet{GS62} model, the additional topological structure allows us to derive continuity results of the correspondence from population vectors $p$ to the set of stable matches $f(p)$. The correspondence is upper hemicontinuous everywhere, but lower hemicontinuous at $p$ if and only if $f(p)$ is single-valued.

When $p$ is composed of ones as in the classic discrete setting, $f(p)$ is generally multi-valued, offering formal explanation for the chaotic nature of the set of stable matchings. However, when $p$ is drawn from real numbers, uniqueness and thus continuity is the typical case.

\end{document}